\newtheorem{theorem}{Theorem}[section]
\newtheorem{definition}{Definition}[section]
\definecolor{BrickRed}{HTML}{b84a39} 
\definecolor{SpecialBlue}{HTML}{004e66}
\def\ps@IEEEtitlepagestyle{%
  \def\@oddfoot{\mycopyrightnotice}%
  \def\@evenfoot{}%
}
\def\mycopyrightnotice{%
  {\footnotesize 
  \begin{minipage}{\textwidth}
  \centering
  \textbf{This work has been submitted to the IEEE for possible publication. Copyright may be transferred without notice, after which this version may no longer be accessible.}
  \end{minipage}
    \hfill}
  \gdef\mycopyrightnotice{}
}
\begin{document}

\title{CloudChain: A Cloud Blockchain Using Shared Memory Consensus and RDMA}

\author{Minghui~Xu,~\IEEEmembership{Member,~IEEE,}
        Shuo~Liu,~\IEEEmembership{Student Member,~IEEE,}
        Dongxiao~Yu,~\IEEEmembership{Senior Member,~IEEE,}
        Xiuzhen~Cheng,~\IEEEmembership{Fellow,~IEEE,}
        Shaoyong~Guo,~
        and~Jiguo~Yu,~\IEEEmembership{Senior Member,~IEEE}
\thanks{M. Xu, S. Liu, D. Yu (Corresponding Author),  and X.  Cheng are with the School of Computer Science and Technology, Shandong University, Qingdao, 266510, P. R. China. E-mail: mhxu@sdu.edu.cn; 201800130061@mail.sdu.edu.cn; dxyu@sdu.edu.cn,  xzcheng@sdu.edu.cn. 
}
\thanks{S. Guo is with the State Key Laboratory of Networking and Switching Technology,  Beijing University of Posts and Communications, Beijing 100876, China.  Email: syguo@bupt.edu.cn.
}
\thanks{J. Yu is with the Qilu University of Technology (Shandong Academy of
Sciences), Jinan, Shandong, 250353, P. R. China; the Shandong Computer
Science Center (National Supercomputer Center in Jinan), Jinan, Shandong,
250014, P. R. China; and the Shandong Laboratory of Computer Networks,
Jinan, 250014, P. R. China. Email: jiguoyu@sina.com.}
\thanks{Manuscript received April 19, 2005; revised August 26, 2015.}}

\markboth{Journal of \LaTeX\ Class Files,~Vol.~14, No.~8, August~2015}%
{Shell \MakeLowercase{\textit{et al.}}: Bare Demo of IEEEtran.cls for IEEE Journals}

\IEEEtitleabstractindextext{
\begin{abstract}
	Blockchain technologies can enable secure computing environments among mistrusting parties.  Permissioned blockchains are particularly enlightened by companies, enterprises, and government agencies due to their efficiency, customizability, and governance-friendly features.  Obviously, seamlessly fusing blockchain and cloud computing can significantly benefit permissioned blockchains; nevertheless, most blockchains implemented on clouds are originally designed for loosely-coupled networks where nodes communicate asynchronously, failing to take advantages of the closely-coupled nature of cloud servers.  In this paper, we propose an innovative cloud-oriented blockchain -- CloudChain, which is a modularized three-layer system composed of the network layer, consensus layer, and blockchain layer. CloudChain is based on a shared-memory model where nodes communicate synchronously by direct memory accesses.  We realize the shared-memory model with the Remote Direct Memory Access  technology, based on which we propose a shared-memory consensus algorithm to ensure presistence and liveness, the two crucial blockchain security properties countering Byzantine nodes.  We also implement a CloudChain prototype based on a RoCEv2-based testbed to experimentally validate our design, and the results verify the feasibility and efficiency of CloudChain.   
\end{abstract}

\begin{IEEEkeywords}
	Blockchain; CloudChain; Cloud Computing; Shared Memory; Remote Direct Memory Access (RDMA)
\end{IEEEkeywords}
}

\maketitle

\IEEEpeerreviewmaketitle

\section{Introduction}
\label{sec:intro}

Blockchain can create a trusted computing environment among mutually mistrusting parties, providing them with remarkable properties such as decentralization, immutability, and traceability.  As a result,  blockchain has delivered benefits to many uses beyond cryptocurrency.  
Based on openness, blockchains can be categorized as either permissionless or permissioned.  Permissionless blockchains allow users to freely join or leave the network while permissioned ones require permissions from the blockchain owner to participate.  Therefore, permissioned blockchains are favored by companies,  enterprises, and government agencies who are willing to build efficient, secure, cost-effective blockchains but reluctant to share data with the public. 

\begin{figure}[!htbp]
	\centering
	\includegraphics[width=3.5in]{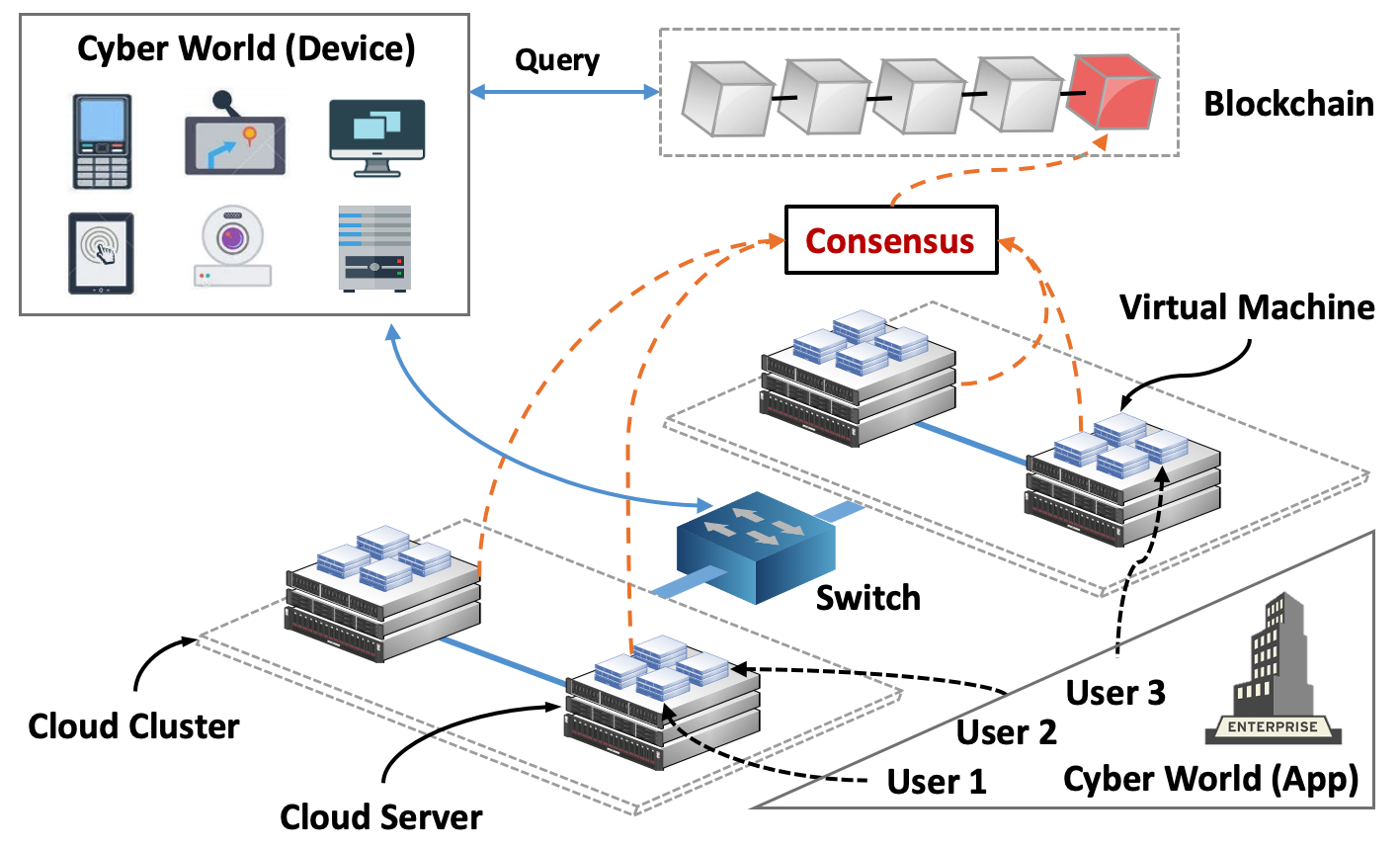}
	\caption{Infrastructure of cloud-oriented blockchains}
	\label{fig:cloud:blockchain}
\end{figure}

Fig.~\ref{fig:cloud:blockchain} demonstrates a typical application of permissioned blockchain in a cloud.  Such an architecture is the most popular one in industry for the purpose of offering blockchain-enhanced services to cloud users. 
Supported by cloud clusters,  each user can have at its disposal a virtual machine (VM) to join the blockchain network.  The VM can provide fast computing hardware,  on-demand storage,  high bandwidth\footnote{Up to 400 Gb/s Ethernet speed with the support of NVIDIA ConnectX Ethernet SmartNICs.},  and pre-loaded blockchain software so that users can participate in the consensus process to achieve agreement on the next state of the blockchain.  Users without VMs can also access the blockchain network and obtain data.  The cloud platform provides strong cloud firewalls, preventing unauthorized access to the blockchain network.  Leading cloud blockchain providers include the IBM blockchain platform \cite{IBMblockchain},   Mircrosoft Azure Blockchain Service \cite{AzureBlockchain},  and Amazon Managed Blockchain \cite{AWSBlockchain}, just to name a few. These platforms are built with open-source blockchain frameworks such as Hyperledger Fabric \cite{hyperledgerfabric2018} and Ethereum \cite{wood2014ethereum}. 

However, state-of-the-arts are just simple adoptions of traditional blockchains in clouds without deeply fusing blockchains with clouds,  thus fails to take advantages of cloud computing benefits. 
On one hand, existing  blockchains mostly adopt complicated mechanisms to address the partially-asynchronous communications inherent in Internet. Such schemes are based on the message-passing model that targets for loosely-coupled distributed systems where nodes communicate asynchronously by exchanging messages.  Nevertheless, cloud computing offers high-performance closely-coupled infrastructures to support reliable and synchronous communications\footnote{For synchronous communications,  we can assume that there exists a fixed upper bound $\Delta$ on the message transmission delay.}, which should significantly benefit consensus, but completely ignored by the existing popular blockchain algorithms. 
On the other hand,  traditional blockchain systems go though the network layers of an OS kernel, which contributes high latency to the block transmissions and consensus process, especially when the block size is large or the consensus algorithm has a complex design. Nevertheless, cloud platforms usually adopt the Remote Direct Memory Access (RDMA) technology that  bypasses the OS kernal, making the data transmissions extremely fast.
Therefore, fusing the blockchain technology seamlessly with cloud computing to make use of the above benefits has a great potential in building efficient permissioned blockchains, serving flexible business goals, and expanding blockchain applications to a vast new space. 

Nevertheless, designing a high-performance blockchain over clouds is non-trivial. Cloud-oriented blockchains can easily become centralized if relying on over-powerful authorities or validators who might completely control the consensus process. Such designs sacrifice decentralized trustlessness and make blockchains vulnerable to various threats. Additionally, even though cloud-oriented blockchains are mostly permissioned,  their security guarantees cannot be ignored, especially when considering Byzantine behaviors.  However, existing cloud-oriented blockchain implementations generally suffer from malicious behaviors, as analyzed in \cite{DBLP:conf/services/BrotsisKLBS20, davenport2018attack}.

Considering the above observations, we propose CloudChain, a cloud-oriented blockchain system, to address the following challenging questions: 1) how to take advantages of the closely-coupled nature of cloud computing platforms in building an efficient, robust,  and synchronous blockchain network; and 2) how to achieve low latency and high throughput, to avoid the centralization problem caused by over-powerful authorities, and to preserve security properties against strong Byzantine behaviors, in cloud-oriented blockchains.
CloudChain utilizes a shared-memory (SM) based consensus algorithm and RDMA technology.  It is a modularized three-layer system composed of the network layer, consensus layer, and blockchain layer.  The network layer provides a RDMA-enabled channel to support memory direct access and implement the shared memory model,  bypassing the  complex network protocols in OS kernels. In the consensus layer, we propose a novel SM-based consensus algorithm that can achieve fast consensus against Byzantine nodes.  The blockchain layer supports blockchain-related functionalities while ensuring persistence and liveness, the two critical blockchain properties for strong security guarantees. 

The primary contributions of this paper can be summarized as follows:
\begin{enumerate}
\item To our best knowledge, CloudChain is the first system based on a shared memory model that deeply integrates blockchain with cloud computing technologies.  In particular, we propose a shared memory based consensus algorithm to realize high-performance CloudChain.
\item CloudChain is a modularized blockchain system comprised of three layers: network layer, consensus layer, and blockchain layer, which together ensure the salient blockchain properties of decentralization, high throughput, persistence, and liveness.
\item To validate CloudChain, we carry out a thorough theoretical security analysis and build a CloudChain prototype for extensive experimental evaluations.  Our results shed light on the new designs of blockchains for cloud-oriented applications. 
\end{enumerate}

The rest of the paper is organized as follows.  Related works are presented in Section~\ref{sec:related:work}. Section~\ref{sec:background:motivation} introduces the background and our motivations. Section~\ref{sec:model} presents models and assumptions. In Section~\ref{sec:CloudChain:design}, we detail the CloudChain system. The security properties of CloudChain are analyzed in Section~\ref{sec:protocol:analysis}. We report the evaluation results in Section~\ref{sec:evaluation} and conclude this paper in Section~\ref{sec:conclusion}.

\section{Related Work}
\label{sec:related:work}

\textbf{Blockchain Protocol.} We summarize popular blockchain protocols that have appeared on the way of building scalable and efficient blockchain systems in chronological order. Bitcoin \cite{Bitcoin} and Ethereum \cite{wood2014ethereum}, as forerunners of blockchain technologies, started blockchain 1.0 and 2.0, respectively. Bitcoin-NG \cite{Bitcoin-NG} is an early attempt to scale Bitcoin from the perspective of the chain structure, in which the leader who generates a key block through mining can directly propose microblocks in subsequent rounds with a predefined rate until a new leader comes into play. ELASTICO \cite{elastico} is the first  to scale blockchain by sharding. It randomly partitions a blockcahin network into shards, with each running a Byzantine Fault-Tolerant (BFT) consensus, and then designates a final committee to reach consensus on one of the proposals collected from all shards. Hyperledger Fabric \cite{hyperledgerfabric2018}, as the most popular permissioned blockchain, leverages a message queue to reach consensus on blocks. The workflow of Hyperledger Fabric contains execution, ordering, and validation phases, and its kernel component is the Kafka protocol implemented in the ordering phase to determine the order of transactions. Algorand \cite{gilad2017algorand} adopts the cryptographic sortition algorithm to randomly select a committee based on the stake distribution, in which the committee members execute a BFT consensus algorithm, namely BA*, to determine the order of blocks. In CloudChain, we propose a novel shared-memory consensus algorithm that can securely and efficiently order transactions in clouds.

\textbf{Cloud-oriented blockchain.} There exist several attempts in industry to deploy blockchains based on the cloud computing technologies. 
IBM blockchain platform \cite{IBMblockchain} offers blockchain solutions based on Hyperledger Fabric and IBM cloud service.  Microsoft develops the Azure Blockchain Service \cite{AzureBlockchain} that allows users to operate and monitor blockchains on the Azure cloud platform. This service supports multiple types of blockchain systems including Ethereum, Hyperledger Fabric, and Cosmos. Amazon proposed the Amazon Managed Blockchain \cite{AWSBlockchain} based on the Amazon Quantum Ledger Database (QLDB). It supports Hyperledger Fabric and Ethereum.  
Effort in academia is quite limited. To our best knowledge,  BoR \cite{DBLP:journals/tsc/HuangJLZWTH20} proposed in 2020, a  permissioned blockchain designed for BaaS,  is the most relevant one, if not the only one for blockchain in clouds. BoR leverages the RDMA technology to optimize the EOS blockchain by accelerating the Delegated Proof of Stake (DPoS) consensus process while our CloudChain makes use of shared-memory as well as RDMA  to  deeply integrate  cloud and blockchain such that high-performance and strongly-secure blockchain services can be realized for cloud users.  
For further readings about cloud-oriented blockchains, we recommend two comprehensive surveys \cite{DBLP:journals/comsur/GaiGZY20, DBLP:journals/comsur/NguyenPDS20} to the interested readers. 

\textbf{Shared memory and RDMA technology.} The shared-memory model has been widely researched and it is shown to be computationally equivalent to the message-passing model in distributed computing \cite{DBLP:journals/jacm/AttiyaBD95}. However, shared memory might outperform message passing in efficiency and simplicity when deployed on cloud servers or datacenters. Shared-memory leader election and consensus algorithms have also been well studied. Bessani \textit{et al.} \cite{DBLP:journals/tpds/BessaniCFL09} proposed the Policy-Enforced Augmented Tuple Space (PEATS), which can be used in shared-memory systems to counter Byzantine behaviors. Aguilera \textit{et al.} \cite{DBLP:conf/podc/AguileraBCGPT18} presented the M\&M model that integrates shared memory and message passing. They provided an elegant proof showing that consensus algorithms based on M\&M model can tolerate $f>N/2$ crash failures. Another study of the M\&M model demonstrates that consensus algorithms can tolerate up to $N-1$ faulty processes and $N/2-1$ Byzantine processes \cite{DBLP:conf/podc/AguileraBGMZ19}. These works reveal the power of the shared-memory model. Besides, RDMA has been adopted by many high-performance distributed systems. DARE \cite{DBLP:conf/hpdc/PokeH15} builds replicated state machines leveraging RDMA. APUS \cite{DBLP:conf/cloud/WangJCYC17} is the first RDMA-based Paxos protocol using RDMA. To our best knowledge, no existing blockchain systems take advantage of RDMA and shared-memory to reach consensus while our CloudChain is the first one in this line of research.

\section{Background and Motivation}
\label{sec:background:motivation}

\subsection{Cloud-Oriented Blockchain}

Cloud-oriented blockchains are the ones implemented in cloud servers, benefiting both blockchain and cloud computing technologies. On one hand,  blockchain-enabled cloud computing can provide Blockchain-as-a-Service (BaaS) -- a provider that maintains blockchain services in clouds for individuals, companies, and government agencies. With BaaS, blockchain can be extended beyond its best-known applications in cryptocurrency to various businesses.  In addition,  it can empower cloud computing in a variety of areas, including data management, access control, privacy protection, and resource allocation.  
On the other hand, cloud computing can offer high-performance hardware, fast and reliable communication channels,  and secure runtime environments with strong firewalls, which can compensate for blockchains with a large amount of computational resources,  provide  synchronized networking environments for blockchains to quickly reach consensus,  and decreases the attack surfaces to blockchains.

\subsection{Message-Passing and Shared-Memory}
There exist two major communication models: message-passing and shared-memory in distributed computing. In message-passing, nodes communicate by exchanging messages while in shared-memory, nodes access a common (shared) memory through a set of predefined operations (e.g., read/write). These two models have been proved to be equivalent under certain assumptions, but each model has its own unique advantages compared to the other and benefits distributed systems in different ways \cite{DBLP:journals/jacm/AttiyaBD95}. 
Concretely,  the message-passing model is suitable for distributed environments where nodes reside on remote machines; therefore it is a top choice to build large-scale distributed systems. In contrast, the shared-memory model is commonly adopted to build tightly-coupled distributed systems that need to provide better performance for efficiency, synchrony, and fault-tolerance.  Nevertheless, shared-memory is generally facilitated with superior hardware support such as RDMA. 
Message-passing requires $f<N/2$ for the consensus problem to be solvable (i.e., a majority of honest nodes) while shared-memory can tolerate at most $N-1$ crash failures  in consensus \cite{DBLP:conf/podc/Abrahamson88},  where $N$ is the network size and $f$ is the maximum of faulty nodes.  A notable work by Aguilera \textit{et al.} \cite{DBLP:conf/podc/AguileraBCGPT18} presents an M\&M model that theoretically inherits salient properties of both message-passing and shared-memory models.  Despite the popularity of the shared-memory model in distributed computing,  it has not been introduced to blockchain systems.

\subsection{RDMA}

\begin{figure}[!htbp]
	\centering
	\includegraphics[width=3.5in]{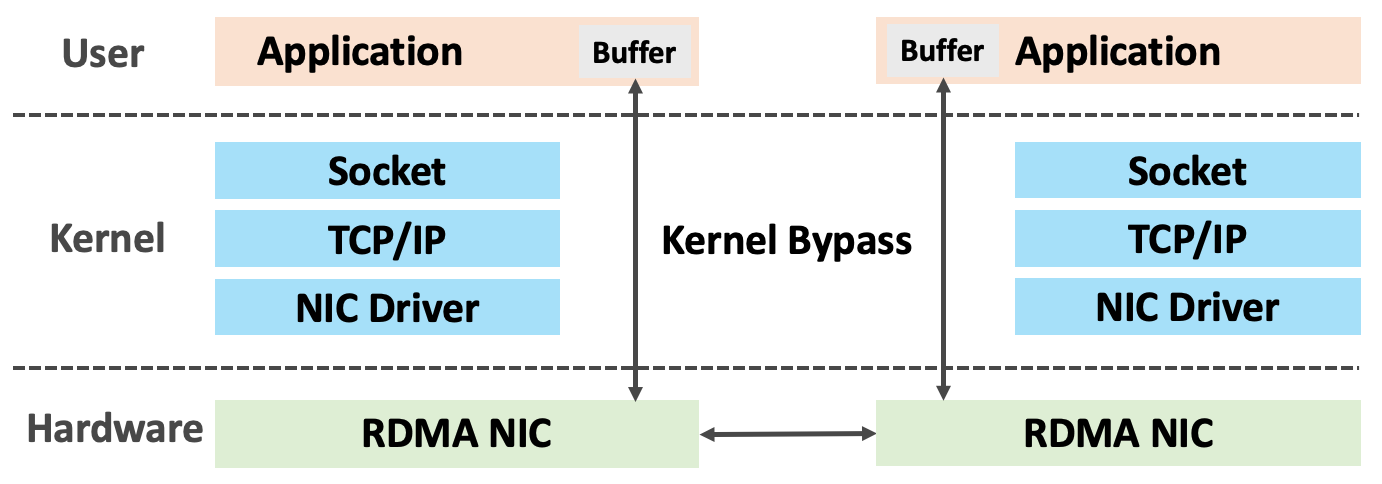}
	\caption{RDMA architecture layers}
	\label{fig:rdma}
\end{figure}

RDMA is a technology that was originally adopted by high-performance datacenters; it earns a spotlight in distributed systems in recent years.  As depicted in Fig.~\ref{fig:rdma},  RDMA allows computers to exchange data in memory without the kernel getting involved so that low latency and high throughput can be achieved.  RDMA has two notable advantages: 1) zero copy: data can be transferred directly from memory to memory though the RDMA Network Interface Controllers (NICs) of two nodes without burdening the CPU; 2) kernel bypass: no operating system and traditional networking stacks are involved during data transfers. With RDMA, data can be moved at low latency and CPU utilization. These salient features have attracted  studies of applying RDMA in distributed systems to address consensus problems \cite{DBLP:conf/podc/AguileraBGMZ19}. 
There exist three implementations of the RDMA technology: InfiniBand,  RDMA over Converged Ethernet (RoCE), and Internet Wide Area RDMA Protocol (iWARP). Particularly, RoCE is the most popular approach that can be used in Ethernet and it outperforms iWARP in terms of latency, throughput and CPU overhead. For a comparison study we refer the readers to \cite{mittal2018revisiting}.

In this paper we design CloudChain, a novel blockchain system that seamlessly integrates blockchian with the cloud computing technology.   CloudChain adopts a shared-memory model to establish a shared-memory consensus algorithm as well as a high-performance and secure blockchain system. To realize such an objective, it is intuitive to choose RDMA for constructing a shared memory, benefiting from its zero copy and kernel bypass features. In a nutshell, CloudChain is the first attempt to utilize the shared-memory model and RDMA technology in building a blockchain system.

\section{System Model and Assumption}
\label{sec:model}
In this paper, we assume that there is a set $V=\{v_1, v_2, \cdots, v_N\}$ of $N$ nodes, which are the virtual machines running  as validators who can participate in a consensus process.  Besides, all validators can communicate using an $N$-entry shared memory denoted as $\mathcal{M}=\{m_1, m_2, \cdots, m_N\}$. Each validator $v_i$ corresponds to a unique entry $m_i$, which can be manipulated through either local or remote operations.  A local operation can take effect instantaneously while the time delay of a remote operation is impacted by network conditions.  We implement $\mathcal{M}$ by the RDMA technology, which enables extremely high bandwidth and low latency. Therefore, it is reasonable to assume that 1) a network composed of honest validators is well connected with low latency -- it can be regarded as a synchronized network; and 2) each validator is equipped with a synchronized clock to precisely measure timeouts so that a known time delay $\Delta$ exists for any remote memory access. In our design, CloudChain proceeds by rounds, with each generating no more than one block, and its ledger structure is a chain of blocks as Bitcoin does \cite{Bitcoin}. 
In this paper, we denote frequently-used notations of transaction, block, block hash, and blockchain by $tx$, $B$, $h$, and $BC$, respectively, and use super/subscript to present more specific information.  For example, $B_k$ is the block generated at the $k$th round, and $B_0$ is the genesis block.  Each block $B_k$ except for $B_0$ is chained to the previous one by storing $h_{k-1}$, the block hash of $B_{k-1}$.  

Each validator can be either honest or Byzantine. Honest validators faithfully abide by a protocol while Byzantine validors can behave arbitrarily, deviating from the protocol, e.g., crash, fail-stop, corrupt, or collude. We denote Byzantine validators by adversary $\mathcal{A}$, which can corrupt its peers but a corruption  can be successful only after a short time period\footnote{This is reasonable since we always assume that an adversary is computationally limited, e.g., polynomial-time bounded}. The number of Byzantine validators is at most $f$ and the network size $N=2f+1$, i.e., no more than $1/2$ validators can be Byzantine.  
Each node can own a certain amount of digital coins and $\mathcal{A}$ controls less than 50\% of the coins of all validators.  We also assume that cryptographic primitives are secure in this paper.

\section{CloudChain Design}
\label{sec:CloudChain:design}

\subsection{Design Objectives}

CloudChain should guarantee the agreement on an immutable total order ledger among all validators, and achieve the following design objectives:
\begin{enumerate}
\item \textbf{Decentralization:} CloudChain should preserve decentralization, i.e., the (voting) power should not be controlled by centralized validators or authorities, even though it is implemented in a cloud.
\item \textbf{High throughput and low latency:} CloudChain needs to provide high throughput and low latency for reaching the consensus as well as confirming transactions. 
\item \textbf{Persistence and liveness:} CloudChain should guarantee the persistence and liveness properties against $f$ Byzantine validators.
\end{enumerate}
These objectives are straightforward except for the persistence and liveness properties,  which are formally defined as follows:
\begin{definition}{\textbf{Persistence and liveness.}}
\label{def:persistence}
   Persistence states that if an honest node proclaims a transaction, other honest nodes, if queried, either report the same result. Liveness, on the other hand, states that if an honest node generates a valid transaction, CloudChain eventually will add it to the blockchain.
\end{definition}

Rigorous descriptions of persistence and liveness are presented in the security analysis section.

\subsection{Layered CloudChain Architecture}
\label{sec:sub:architecture}

\begin{figure*}[!htbp]
	\centering
	\includegraphics[width=6.5in]{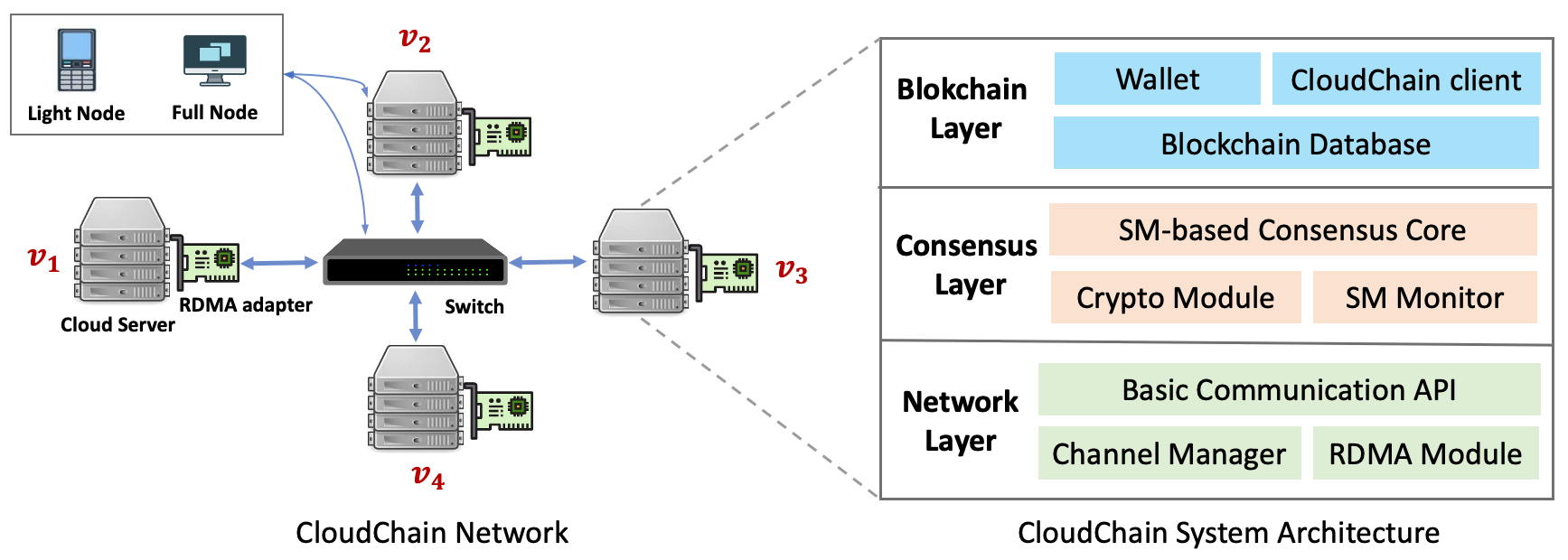}
	\caption{Network and system architecture of CloudChain}
	\label{fig:architecture}
\end{figure*}

\subsubsection{CloudChain Network}
Prior to the illustration of the CloudChain architecture, we explain what constitutes a CloudChain network. Specifically, three types of nodes compose the network: light nodes, full nodes, and validators.  
A light node runs a wallet and a CloudChain client whose details are presented in Section~\ref{sec:sub:blockchain:layer}.  It can connect to the CloudChain network, issue transactions, and store block headers\footnote{A chain of block headers is obtained by removing all block bodies of a blockchain} for a simplified payment verification. It is prohibited from participating in the consensus process. 
A full node, in contrast, locally stores a complete blockchain so that it can verify transactions and blocks;  thus full nodes are regarded as relay nodes that can validate received transactions and blocks, and then relay them to peers. They can serve light nodes by providing them with detailed information about a blockchain.  

A validator, which is a special full node,  participates in the consensus procedure. An eligible validator must meet three conditions beyond a full node:  1) as a validator, it should deposit a certain amount of money to obtain voting power, which is recorded by the blockchain; 2) validators must operate in a small-scale network with RDMA support, e.g., a datacenter or a cloud server where RDMA-enabled connections are stable; and 3) each validator must run the complete three-layer CloudChain system shown in Fig.~\ref{fig:architecture},  for  participating in a consensus process. 
These three types of nodes  can cover a wide range of users: light nodes are commonly run on lightweight devices, e.g., mobile devices; full nodes except for validators can be implemented on personal computers or edge servers; only validators need special RDMA support, and thus should be virtual machines at clouds. To better visualize how validators interconnect, we provide a simplified CloudChain network shown in Fig.~\ref{fig:architecture}, which consists of four validators $v_1, v_2, v_3, v_4$,  with each connecting to a switch using a RDMA adapter.

The CloudChain system contains three layers: network layer, consensus layer, and blockchain layer. The network layer provides RDMA support to realize the shared-memory communications.  Leveraging shared memory, the consensus layer offers a SM-based consensus core to determine the ordering of blocks. On top of the consensus layer, we build the blockchain layer to furnish CloudChain with storage, client, and wallet.  In the following subsections, we illustrate the layered architecture in detail.

\subsubsection{Network Layer}
\label{sec:sub:network:layer}
The network layer is comprised of three modules: RDMA, channel manager, and basic communication API, which together provide efficient and easy-to-use network functionalities.  The RDMA module is of the most basics in the network layer. It provides RDMA-enabled protocols to support the Queue Pair (QP) manipulations and low-level common user API.
A QP is similar to a socket in the context of the standard IP stack.  Specifically, a QP should be initialized on both sides to share communication ports so that all RDMA operations can be adopted.  
The low-level common user API includes \textit{VPI Verbs API}, \textit{RDMA\_CM API}, \textit{RDMA Verbs API},  and \textit{Events} (more details can be found in the Mellanox website \cite{Mellanox}). In this paper,  we employ the Paravirtual RDMA (PVRDMA) which offers better support for virtual machines.

Based on the RDMA module, we develop the channel manager and the basic communication API modules to better accommodate upper-layer uses of RDMA. The channel manager allows each node to manipulate the in-use channels. It serves two elementary objects: internal and external channels. An internal channel is set up for two nodes running on a single server (one server can serve multiple validators) while an external channel is established between two nodes on distinct servers. The channel manager is responsible for establishing, updating, and destroying both internal and external channels, as well as printing channel information. 

The basic communication API supports three modes, namely socket, unilateral, and bilateral.  Socket takes a traditional way of communicating through the TCP/IP protocol. CloudChain adopts sockets to transfer short messages such as those carrying user identities, transactions, memory addresses,  and QP information. The unilateral communication mode provides \textit{rdma\_read} and \textit{rdma\_write} operations, with the former aiming to fetch data from a memory region and keep it in a buffer while the latter writing data into a memory region. Note that directly writing data into remote memories may result in severe security problems; thus such operations are disabled and we enforce an access policy that requires a registered node to write only on its own local memory but can read both local and remote memories. Finally, the bilateral communication mode involves a couple of operations, namely \textit{rdma\_send} and \textit{rdma\_receive}, which should be executed in pairs. Specifically, when a node $v$ sends data to a designated destination $u$,  one should ensure that $u$ has issued a corresponding receive request.

\subsubsection{Consensus Layer}
\label{sec:sub:consensus:layer}
The consensus layer consists of a crypto module, an SM monitor, and an SM consensus core. The crypto module provides a standard crypto library with various cryptographic schemes, of which CloudChain adopts the authenticated encryption scheme, hash function, and elliptic curve cryptography. Our crypto module is built with the Crypopp \cite{Crypopp}. 

To better support the consensus layer, we equip CloudChain with the SM monitor that can operate on shared memory with three basic functions: $\mathtt{Write}()$, $\mathtt{Read}()$, and $\mathtt{Scan}()$, which are realized with the support of network-layer functionalities and are designed to be directly called by the SM consensus core.  
Specifically, following the access rules formulated by the basic communication API, $\mathtt{Write}(m_i, \langle \cdot \rangle)$ can write data into $m_i$ only if the executant is exactly $v_i$ while $\mathtt{Read}(m_i, \langle \cdot \rangle)$ can locally or remotely read data from $m_i$, where $\langle \cdot \rangle$ stands for unspecified data content.
The $\mathtt{Scan}(\mathcal{M})$ function collects data from the memory regions $\mathcal{M}$. Since data might appear with time differences,  we need to scan $\mathcal{M}$ multiple times to  ensure a complete collection. To reduce the involved communication overhead, $\mathtt{Scan}()$ only reads the data not yet present during the current round. To achieve this functionality, $\mathtt{Scan}()$ is furnished with a detection mechanism to track changes of peers' local memories using a scan map. This method reduces the communication complexity to $O(N)$ despite the multi-round scanning. More details of $\mathtt{Scan}()$ are further demonstrated in Section~\ref{sec:sub:consensus}.

The SM-based Consensus Core is of the most importance in CloudChain since it determines the ordering of blocks,  as well as the efficiency and security of CloudChain.  We build it as an efficient BFT middleware that coordinates nodes to reach consensus on a block proposal.  Concretely, the consensus core contains the proposer selection queue, transaction pool, state synchronization tool, and SM-based Consensus Algorithm (SMCA), with all working together to achieve an agreement on the next block.
The idea of the proposer selection queue was initially proposed by \cite{tendermint}, targeting to maintain a deterministic queue of leaders, by which all validators can query who is the leader of the current round.  The queue is updated after each round by letting validators move ahead in the queue according to their voting powers.  
The transaction pool provides each validator with a stack to store valid transactions received from peers for transaction collection. A validator selected as a leader can serialize transactions and pack them into a new block. The state synchronization tool helps nodes synchronize their ledgers with peer nodes, which is also responsible for recovery. SMCA intends to determine the next block while protecting against at most $f$ Byzantine nodes. We put details of SMCA in Section~\ref{sec:sub:consensus}.

\subsubsection{Blockchain Layer}
\label{sec:sub:blockchain:layer}
The blockchain layer integrates blockchain-related components including wallet, CloudChain client, and blockchain database. In a wallet, users can securely store coins and acquire their account balances and information of approved transactions. Note that a wallet does not rely on the network and consensus layers; thus can be run by any user who intends to interact with CloudChain.  Each user is assigned with a unique identity,  which can be used to register multiple accounts.

We set up a CloudChain client based on a Remote Procedure call (RPC) client/server model and the RDMA-enabled socket. A user can run a CloudChain client to remotely connect to a full node.  A full node can handle requests to service light nodes such as requiring information of a certain block.  A light node equipped only with a CloudChain client can issue an instant transaction but cannot participate in the consensus process since only validators implemented with a three-layer system stack have the right to propose or vote on a block. The blockchain database stores the immutable blockchain ledger organized as a chain of blocks, as well as the state information including the current block height, validator list, and leader identity. 
The design motivation of the modularized blockchain layer is to serve various users with heterogeneous hardware capacity and let high-performance validators with RDMA support execute the consensus process.

\subsection{Initialization}
\label{sec:sub:init}

\begin{algorithm}[!htbp]
	\label{alg:init}
	\DontPrintSemicolon
	\caption{Initialization Procedure}
	\SetKwInOut{Input}{input}
	\SetKwProg{Fn}{Procedure}{}{}
	\Fn{Init($v_i$)}{
		$\triangleright$ \textcolor{SpecialBlue}{identity establishment}\;
		$(\sk_i, \pk_i) \leftarrow \kgen(\secparam)$\;
		$id_i=\pk_i$, and broadcast $id_i$\;

		$\triangleright$ \textcolor{SpecialBlue}{shared memory connection}\;
		\For {each $m_j \in \mathcal{M}$}{
			ConnectQP($m_i, m_j$)\;
		}

		$\triangleright$ \textcolor{SpecialBlue}{initialization for $\mathtt{Scan}(\mathcal{M})$ }\;
		create $scan\_map([\vec{id}, \vec{b}])$\;

		$\triangleright$ \textcolor{SpecialBlue}{blockchain initialization}\;
		initialize the proposer selection queue $Q$ to obtain the first leader\;
		$BC_0=B_0$\;
	}
\end{algorithm}

Let's first examine the initialization procedure executed by each node in the outset. This procedure consists of four stages, namely identity establishment, shared memory connection,  $\mathtt{Scan}()$ initialization, and blockchain initialization.
The key-generation algorithm $\kgen(\secparam)$ takes as input the security parameter $\secparam$ written in unary, and generates a key pair $(\sk_i, \pk_i)$ for node $v_i$, where $\sk_i$ is a private key and $\pk_i$ is a public key. Then $v_i$ regards $\pk_i$ as its identity $id_i$ and broadcasts $id_i$ to the CloudChain network. To establish shared memory connections with the other nodes, $v_i$ calls ConnectQP($m_i, m_j$) for each $m_j \in \mathcal{M}$ to create a QP between $m_j$ and $m_i$. Afterwards, $v_i$ constructs a scan map represented by $scan\_map([\vec{id}, \vec{b}])$, where $[\vec{id}, \vec{b}]$ denotes $\{(id_j, b_j)|j=1,2,\cdots, N\}$, mapping each $id$ to a binary variable $b$. Note that $scan\_map([\vec{id}, \vec{b}])$ is prepared for the $\mathtt{Scan}()$ algorithm illustrated in Section~\ref{sec:sub:consensus}. Then $v_i$ initializes its proposer selection queue $Q$ by computing the first leader according to the strategy in \cite{tendermint}.  Finally, $v_i$ initializes its blockchain with the genesis block $B_0$,  described as $BC_0=B_0$. 

\subsection{SM-based Consensus Algorithm (SMCA)}
\label{sec:sub:consensus}
As the core of CloudChain, SMCA determines how to append new blocks orderly and securely. In this subsection, we present the details of SMCA to show how to achieve BFT consensus with shared memory. As shown in Algorithm~\ref{alg:consensus}, SMCA proceeds by three phases: $\texttt{PROPOSE}$, $\texttt{COMMIT}$, and $\texttt{DECIDE}$. 

In the $\texttt{PROPOSE}$ phase, each validator has access to $Q$ that determines the leader of the current round.  During the $k$-th round,  a validator $v_i$ who executes $\mathtt{Proposer}(Q, k)$ queries $Q$ to obtain the identity $id_l$ of leader $v_l$.  If $id_i=id_l$, $v_i$ appears as the leader for the $k$-th round; otherwise, it becomes a follower and recognizes $v_l$ as the leader. Afterwards, the leader assembles the new block $B_k$ and write the proposal $\langle\mathtt{PROPOSE}, B_k, \sigma\rangle$ into $m_i$, where $\sigma$ is the signature of the concatenation of $\mathtt{PROPOSE}$ and $B_k$. Then each follower reads the leader's memory $m_l$ to fetch the proposal. To prevent endless waiting for a proposal,  each follower sets a timeout so that it can jump to Line 39 of Algorithm~\ref{alg:consensus} to abandon the current round if $\mathtt{Time}()\geq T_0+\delta_1$, where $\mathtt{Time}()$ returns the current UNIX timestamp, $T_0$ is the termination time of the previous round, and $\delta_1$ is a constant that can be adjusted according to specific implementations. 

In the $\texttt{COMMIT}$ phase, each validator constructs two sets $S_0[h]$ and $S_1[h]$ to store votes, which are initialized to be empty. In particular, $S_0[h]$ is the set of validators proposing FALSE for the current block whose hash is $h$ while $S_1[h]$ is the set of validators proposing TRUE for $h$.
Then each validator verifies the $\sigma$ and $B_k$ parsed from the proposal obtained in the $\texttt{PREPARE}$ phase.  \emph{Verify}($B_k$) outputs TRUE if $B_k$ has valid transactions and a correct block header. Only when $\sigma$ is valid and \emph{Verify}($B_k$)==TRUE can $v_i$ write a commit message $\langle\mathtt{COMMIT}, h_k, \text{TRUE}, \sigma \rangle$ into $m_i$ and add itself to $S_1[h_k]$; otherwise, $v_i$ writes $\langle\mathtt{COMMIT}, h_k, \text{FALSE}, \sigma \rangle$ into its memory and inserts itself to $S_0[h_k]$. 

\begin{algorithm}[!htbp]
	\label{alg:scan}
	\DontPrintSemicolon
	\caption{Memory Scan Algorithm}
	\SetKwInOut{Input}{input}
	\SetKwProg{Fn}{Function}{}{}
	\Fn{$\mathtt{Scan}(\mathcal{M})$}{
		Initially, create \emph{buffer}[] and $count:=0$\;
		\For {each $(id_j, b_j)$ in $scan\_map(\vec{id}, \vec{b})$}{
			\If {$b_j==0$}{
				$\mathtt{Read}(m_j, \langle \cdot \rangle)$\;
				\If {read operation is successful}{
					add $(id_j, \langle \cdot \rangle)$ to buffer[$count$++]\;
					$b_j=1$\;
				}
			}
		}
		return \emph{buffer}[]\;
	}
\end{algorithm}

Before proceed, we need to explain the memory scan algorithm $\mathtt{Scan}(\mathcal{M})$ described in Algorithm~\ref{alg:scan}. $\mathtt{Scan}(\mathcal{M})$ looks through $\mathcal{M}$ but only reads data from the entries that have not presented votes in the current round. To achieve this functionality,  $\mathtt{Scan}(\mathcal{M})$ tracks changes of $\mathcal{M}$ by maintaining $scan\_map([\vec{id}, \vec{b}])$. If $b_j==0$, which means $v_j$ has not yet voted, $\mathtt{Scan}(\mathcal{M})$ keeps on reading data from $m_j$. Once a read operation is successful, $(id_j, \langle \cdot \rangle)$ is added to the buffer and $b_j$ is set to $1$. When the \emph{for} loop ends, $\mathtt{Scan}(\mathcal{M})$ returns the buffer. After each round, all entries in $\vec{b}$ are reset to zero. 

Now we are back to continue explaining the $\texttt{COMMIT}$ phase (Line 22-34).  For each commit message read from the returned buffer, $v_i$ verifies $\langle\mathtt{COMMIT}, h_k, \textit{VOTE}, \sigma \rangle$ received from $v_j$ by checking $\sigma$,  where \emph{VOTE} can be either TRUE or FALSE. If the validation succeeds, $v_i$ adds $\{v_j\}$ to $S_1[h_k]$ only if \textit{VOTE} is TRUE; otherwise adds $\{v_j\}$ to $S_0[h_k]$. Then for each $h_k$, if $|S_0[h_k]|\geq f+1 \vee |S_1[h_k]|\geq f+1$, $v_i$  ends the \emph{while} loop and steps into the $\texttt{DECIDE}$ phase. Note that the bound $f+1$ is adopted to defend Byzantine behaviors since there are at most $f$ Byzantine nodes and $f+1$ nodes are sufficient to cover at least one honest node. Finally, $v_i$ should abandon the current round and goto Line 39 if $\mathtt{Time}()\geq T_0+\delta_2$, where $\delta_2$ is an adjustable constant as $\delta_1$ functions.

In the $\texttt{DECIDE}$ phase, each validator checks whether $|S_1[h_k]|\geq f+1$, meaning that more than $f$ validators have voted an \emph{accept} for $B_k$. If such a condition holds, $v_i$ appends $B_k$ to $BC_{k-1}$; otherwise, $v_i$ discards $B_k$.  Then each validator increments $k$ and enters into the next round with $T_0=\mathtt{Time}()$.

\begin{algorithm}[!htb]
\label{alg:consensus}
\DontPrintSemicolon
\caption{SM-based Consensus Algorithm ($k$-th round)}
$\triangleright$ \textcolor{SpecialBlue}{ \textbf{\texttt{PROPOSE}}}\;
$id_l=\mathtt{Proposer}(Q, k)$\;
\If {$id_i==id_l$}{
	$\triangleright$ \textcolor{BrickRed}{as a leader}\; 
	$\mathtt{Write}(m_i, \langle\mathtt{PROPOSE}, B_k, \sigma\rangle)$\;
} \Else {
	$\triangleright$ \textcolor{BrickRed}{as a follower}\; 
	\While{$\mathtt{Time}()<T_0+\delta_1$}{
		\If {$\mathtt{Read}(m_{l})$ is successful}{
			end the \emph{while} loop and goto line 12\;
		}
	}
	abandon the current round and goto Line 39\;
}

$\triangleright$ \textcolor{SpecialBlue}{ \textbf{\texttt{COMMIT}}}\;
$\triangleright$ \textcolor{BrickRed}{for each validator}\; 
$S_0[h]=\emptyset$ $\triangleright$ set of validators proposing FALSE\;
$S_1[h]=\emptyset$ $\triangleright$ set of validators proposing TRUE\;
\If {$\sigma$ is valid and Verify($B_k$)==TRUE}{
	$\mathtt{Write}(m_i,  \langle\mathtt{COMMIT}, h_k, \text{TRUE}, \sigma \rangle$)\;
	$S_1[h_k]=S_1[h_k]\cup \{v_i\}$\;
} \Else {
	$\mathtt{Write}(m_i,  \langle\mathtt{COMMIT}, h_k, \text{FALSE}, \sigma \rangle$)\;
	$S_0[h_k]=S_0[h_k]\cup \{v_i\}$\;
}
\While {TRUE}{
	\emph{buffer}[]$\leftarrow \mathtt{Scan}(\mathcal{M})$\;
	\For {each commit message read from buffer[]}{ 
		\If {$\langle\mathtt{COMMIT}, h_k, \text{VOTE}, \sigma \rangle$ from $v_j$ is valid }{
			\If {VOTE==TRUE}{
				$S_1[h_k]=S_1[h_k]\cup \{v_j\}$\;
			} \Else {
				$S_0[h_k]=S_0[h_k]\cup \{v_j\}$\;
			}
		}
	}

	\For {each $h_k$}{
		\If {$|S_0[h_k]|\geq f+1 \vee |S_1[h_k]|\geq f+1$} {
			end the \emph{while} loop and goto the \texttt{DECIDE} phase\;
		}
	}

	\If {$\mathtt{Time}()\geq T_0+\delta_2$}{
		abandon the current round and goto Line 39\;
	}
}
$\triangleright$ \textcolor{SpecialBlue}{ \textbf{\texttt{DECIDE}}}\;
$\triangleright$ \textcolor{BrickRed}{for each validator}\; 
\If {$|S_1[h_k]|\geq f+1$}{
	$BC_k=BC_{k-1}+B_k$\;
}
enter the next round with $k=k+1$, $T_0=\mathtt{Time}()$\;
\end{algorithm}

\section{Protocol Analysis}\label{sec:protocol:analysis}

In this section, we provide a rigorous analysis on the security strength of CloudChain, i.e., CloudChain possesses the persistence and liveness properties.

\begin{theorem}{Persistence.}
\label{thm:persistence}
   If an honest node $v_i$ proclaims a transaction $tx_k^t$, the $t$-th transaction in the $k$-th block, other nodes, if queried, should report the same result.
\end{theorem}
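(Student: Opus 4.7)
The plan is to reduce Persistence to a safety property of the consensus per round: no two honest validators can commit different blocks at the same height $k$. If $v_i$ proclaims $tx_k^t$, then $B_k$ (containing $tx_k^t$ as its $t$-th entry) sits in $v_i$'s chain, and once I show that every other honest validator either holds the same $B_k$ at height $k$ or has not yet decided round $k$, the ``same result'' conclusion on queries is immediate. So the whole argument turns on agreement within a single round of Algorithm~\ref{alg:consensus}.

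The key step is a quorum-intersection argument on the DECIDE condition. Inspecting Algorithm~\ref{alg:consensus}, $v_i$ appended $B_k$ only because it observed $|S_1[h_k]| \geq f+1$ valid signed TRUE commits for $h_k = H(B_k)$. Assume towards contradiction that another honest $v_{i'}$ decides $B'_k$ with $h'_k \neq h_k$; the same criterion gives $|S_1[h'_k]| \geq f+1$ at $v_{i'}$. I would argue that each validator can contribute to at most one of the two sets: an honest validator writes exactly one COMMIT message per round by protocol, so it lies in at most one $S_1[\cdot]$; a Byzantine validator is constrained by the $\mathtt{Scan}$ semantics, since once a read from $m_j$ succeeds and $b_j$ is flipped to $1$ the commit is frozen in the scanning node's buffer, and its signature pins $v_j$ to a single hash for that round. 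Hence $|S_1[h_k]| + |S_1[h'_k]| \leq N = 2f+1$, contradicting the lower bound $2(f+1) = 2f+2$. Signature unforgeability (from the cryptographic assumption in Section~\ref{sec:model}) rules out Byzantine forgery of honest commits, which is needed to make the count water-tight.

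With per-round agreement established, the proof concludes quickly. Since round $k$ yields a unique decided block $B_k$ across the honest validators that have completed round $k$, the position $(k,t)$ in any such ledger deterministically holds $tx_k^t$, and a queried honest node reports exactly $tx_k^t$, matching $v_i$'s proclamation. A node that has not yet finished round $k$ simply has no value at $(k,t)$, so no conflicting report can arise.

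The hardest part will be pinning down the ``at most one contribution per validator per round'' property in the Byzantine case. A Byzantine $v_j$ is legally allowed to rewrite $m_j$ at will, so different honest scans could in principle read different commits and break the disjointness on which the intersection argument rests. I expect this step to require combining the $\mathtt{Scan}$ detection mechanism (each honest scanner locks in a single read per round) with the unforgeability of $\sigma$ to show that once any honest validator has incorporated a signed round-$k$ commit of $v_j$ into its $S_1$, no further signed commit by $v_j$ for the same round can validly populate the other $S_1$ set at any other honest validator. Making this bookkeeping precise, rather than the quorum arithmetic itself, is where I anticipate the real work.
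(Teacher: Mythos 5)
Your overall route is the same as the paper's: reduce persistence to single-round agreement and derive a contradiction by counting two quorums of size $f+1$ against $N=2f+1$. The paper additionally treats, as a separate case, the divergence in which one honest node accepts $B_k$ while another finishes round $k$ and discards it (this matters because $B_{k+1}$ chains to $h_k$); you dismiss this with ``has not yet decided round $k$,'' which conflates ``undecided'' with ``decided to discard,'' but the same counting argument covers it, so that is a presentational rather than substantive difference.

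The substantive issue is the step you yourself flag as ``the real work'': showing that no validator contributes to both $S_1[h_k]$ and $S_1[h'_k]$. This does not hold for Byzantine validators, and your proposed repair via the $\mathtt{Scan}$ semantics does not close it. The scan map $scan\_map([\vec{id}, \vec{b}])$ is \emph{local} to each scanning validator: flipping $b_j$ to $1$ only prevents \emph{that} validator from reading $m_j$ a second time in the round; it in no way prevents a different honest validator from later reading a different, also validly signed, commit that $v_j$ has since written into $m_j$. Signature unforgeability is irrelevant here because $v_j$ signs both commits itself. Concretely, if a Byzantine leader equivocates with $B_k$ and $B'_k$, and each of the $f$ Byzantine validators first writes a TRUE commit for $h_k$ (scanned by $v_i$) and then overwrites it with a TRUE commit for $h'_k$ (scanned by $v_{i'}$), then each quorum needs only one honest member, and $2+f\le 2f+1$ for $f\ge 1$, so the arithmetic yields no contradiction. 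Be aware that the paper's own proof rests on exactly the same unproved disjointness --- it justifies it only by ``each honest node only commits on one block,'' which constrains honest voters and says nothing about Byzantine double-voting --- so your candor about where the difficulty lies is to your credit, but neither your sketch nor the published argument discharges it; closing it would require an additional mechanism such as cross-validation of received votes to detect equivocation, or moving to $N=3f+1$ with quorums of size $2f+1$.
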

\begin{proof}
Proving persistence is equivalent to examining whether for any two blockchains $BC^i$ and $BC^j$ respectively owned by nodes $v_i$ and $v_j$, there exist two different transactions $tx^i\in BC^i$ and $tx^j\in BC^j$ that are in the same position as $tx_k^t$ and respectively contained by $B_k^i$ and $B_k^j$. To prove by contradiction, we assume that such $tx^i$ and $tx^j$ do exist, and there are two cases to consider when this assumption holds. 

Case 1: $tx^i$ and $tx^j$ are respectively appended to the blockchains $BC^i$ and $BC^j$ in the same $k$-th round. This case might diverge a blockchain and make it vulnerable to double-spending attacks. Under this circumstance, a Byzantine leader can propose the two different blocks $B_k^i$ and $B_k^j$ in the $\texttt{PROPOSE}$ phase of the same round. Such a leader may achieve this goal by rapidly substituting $B_k^i$ in its memory $m_l$ with $B_k^j$ so that nodes read $m_l$ might obtain different blocks due to time differences. However, CloudChain certainly can prevent from accepting both. It is important to observe that in CloudChain each validator should put $h_k$ into $\langle\mathtt{COMMIT}, h_k, \text{FALSE}, \sigma \rangle$, and each honest node only commits on one block, either $B_k^i$ or $B_k^j$. The case when both are accepted indicates that there are at least $f+1$ nodes voting for $B_k^i$ and another $f+1$ nodes voting for $B_k^j$, which implies $N\geq 2f+2$, contradicting the fact that $N=2f+1$. 

Case 2: $tx^i$ is appended to the blockchain $BC^i$, but at the same round $k$, $BC^j$ does not accept any block. In this case, the leader might not be Byzantine and proposes one block. However, the blockchain might diverge when some nodes decide on accepting the block while some other nodes decide on abandoning it. It is straightforward to observe that such a harmful case cannot happen since it requires at least $f+1$ nodes voting for accepting $B_k^i$ and another $f+1$ nodes voting for discarding $B_k^i$,  contradicting the fact that $N=2f+1$.

In a nutshell, all the nodes queried for a transaction should report the same result or report error messages. 
\end{proof}

\begin{theorem}{\textbf{Liveness.}}
\label{thm:liveness}
  If an honest node issues a valid transaction and broadcasts it, CloudChain adds it to the blockchain within $T$ rounds w.h.p.\footnote{In this paper,  we say that an event $E$ occurs with high probability (w.h.p.) if for any $c\geq1$, $E$ occurs with probability at least $1-1/N^c$. }, where $T$ is a sufficiently large integer.
\end{theorem}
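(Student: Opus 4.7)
The plan is to trace the lifecycle of the transaction $tx$ from the moment an honest node issues it until it is appended to some block, and to bound the number of rounds needed in terms of $N$ and a security parameter. First, I would observe that by the broadcast step and the synchrony of the shared-memory network (every remote access completes within $\Delta$), $tx$ enters the transaction pool of every honest validator within a bounded time $\Delta_{\text{tx}}$ after being issued. From then on, the question is simply how soon some honest leader commits a block that includes $tx$.

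Second, I would analyze the proposer selection queue $Q$. The Tendermint-style update rule of \cite{tendermint} assigns leader slots in proportion to voting power, and by the assumption in Section~\ref{sec:model} the adversary $\mathcal{A}$ controls strictly less than $1/2$ of the coins; therefore a strict majority of the slots in any sufficiently long window belong to honest validators. A standard Chernoff bound then gives that in a window of $T$ consecutive rounds, at least $\Omega(T)$ leaders are honest with probability at least $1-e^{-\Omega(T)}$. Choosing $T=\Theta(c\log N)$ makes this failure probability at most $N^{-c}$, which is what \emph{w.h.p.} requires.

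Third, I would show that whenever an honest validator $v_l$ is leader in round $k$ and $tx$ sits in its pool, round $k$ commits a block containing $tx$. Since $v_l$ is honest it packs $tx$ into $B_k$ and writes $\langle\mathtt{PROPOSE},B_k,\sigma\rangle$ into $m_l$ before $T_0+\delta_1$, provided $\delta_1$ is chosen larger than the local proposal-construction time plus $\Delta$. Each of the $\ge f{+}1$ honest followers then reads $m_l$ within the PROPOSE timeout, verifies $\sigma$ and $B_k$ successfully (because the leader is honest), and writes $\langle\mathtt{COMMIT},h_k,\text{TRUE},\sigma\rangle$ into its own memory before $T_0+\delta_2$, provided $\delta_2>\delta_1+2\Delta+t_{\text{verify}}$. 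By the guarantee of $\mathtt{Scan}(\mathcal{M})$ (Algorithm~\ref{alg:scan}), every honest validator collects all $f{+}1$ honest TRUE votes before the COMMIT timeout; hence the check $|S_1[h_k]|\ge f+1$ in the \texttt{DECIDE} phase succeeds at every honest node and $B_k$ is appended uniformly. Persistence (Theorem~\ref{thm:persistence}) then prevents any honest node from ever rolling $B_k$ back, so $tx$ is permanently on-chain.

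Finally, I would glue these pieces together: after $\Delta_{\text{tx}}$ the transaction is in every honest pool; within $T=\Theta(c\log N)$ subsequent rounds at least one round has an honest leader with probability $\ge 1-N^{-c}$; and by the third step that round appends a block containing $tx$ (assuming a FIFO/priority packing rule, which is standard for the transaction pool described in Section~\ref{sec:sub:consensus:layer}). The main obstacle I expect is the third step's assumption that an adversarial leader cannot arbitrarily stall progress: one must argue that the PROPOSE and COMMIT timeouts $\delta_1,\delta_2$ together with the \emph{abandon-round} branch (Line~39) guarantee that every Byzantine-led round terminates in bounded real time, so that the counting of honest leader slots in step two really does translate into a bound on wall-clock rounds. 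Making $\delta_1,\delta_2$ precise in terms of $\Delta$ and the verification cost, and carefully handling the equivocating-leader case to ensure the round still terminates rather than deadlocks, is the most delicate piece of the argument.
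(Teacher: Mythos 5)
Your proposal follows essentially the same route as the paper's proof: argue deadlock-freedom from the PROPOSE/COMMIT timeouts so every round terminates, apply a Chernoff bound to the number of honest-leader (``good'') rounds among $T$ consecutive rounds using the $<1/2$ adversarial stake assumption, and conclude that an honest-led round commits a block containing the transaction. The only substantive difference is that the paper additionally accounts for the backlog of pending honest transactions --- requiring $T>4\widetilde{T_N}/\widetilde{T_H}$ so that enough good rounds occur to flush the queue before $tx$ is packed --- whereas you assume the first honest-led round already includes $tx$; this affects only the size of $T$, not the validity of the argument.
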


\begin{proof}

Before proving a valid transaction can be accepted within $T$ rounds, we demonstrate that CloudChain is free from deadlocks. Since our network is synchronous, the communication overhead of each SM-related operation executed by an honest node is bounded by a very small constant $\Delta$, based on which we set up two timeouts for the $\texttt{PROPOSE}$ and $\texttt{COMMIT}$ phases to prevent deadlocks caused by Byzantine behaviors. A Byzantine leader might not present its proposal  in the $\texttt{PROPOSE}$ phase; however, followers can abandon the current round if they read nothing from $m_l$ after waiting for $\delta_1$. A validator might not be able to collect $f+1$ votes to confirm a decision  in the $\texttt{COMMIT}$ phase. In such a case, the second timeout $T_0+\delta_2$ is introduced to avoid endless waiting. The $\texttt{DECIDE}$ phase contains no loops so that deadlock cannot happen during this phase. Therefore, CloudChain must proceed no matter what Byzantine behaviors happen. Roughly, the one-round execution has to be completed within $\delta_2$. 

Next we formalize the growth of CloudChain as a Possion process. Assume an honest node issues and broadcasts a transaction $tx$. Other honest nodes can receive $tx$ within $\Delta$. Since the stake ratio of Byzantine nodes is less than $1/2$, the probability that an honest node is selected to propose a block in a sufficient long interval of length $T$ is larger than $1/2$. Let the total number of honest transactions (malicious transactions are discarded by honest nodes when packed into a block) be $\widetilde{T_N}$ and the throughput be $\widetilde{T_H}$.  Since an honest leader first packs older transactions into a block, $tx$ is added to blockchain using $\widetilde{T_N}/\widetilde{T_H}$ good rounds. Let $X_i$ denote the event where the $i$th round is good.  Let $X^T=\sum_1^T X_i$ and $\mu=T/2$.  We have $\mathbb{E}[X_i]\ge T/2$. Applying the Chernoff bound, we obtain
\begin{equation}
    Pr[X^T\leq (1-\delta)\mu]\leq e^{-T/16},
\end{equation}
where $\delta = 1/2$. Therefore,
\begin{equation}
    Pr[X^T> T/4]=1-Pr[X_T\leq T/4]>1-e^{-T/16}.
\end{equation}

This indicates that when $T$ is sufficiently large, there is at least $T/4$ good rounds w.h.p. On the other hand, it is trivial to satisfy the requirement that $\widetilde{T_N}/\widetilde{T_H}<T/4$ when $T$ is sufficiently large. In other words, CloudChain requires that $T>4\widetilde{T_N}/\widetilde{T_H}$. This completes the proof. 
\end{proof}

\section{Evaluation}
\label{sec:evaluation}

\begin{table}[htbp]
	\caption{Implementation Specifications}
	\begin{center}
	\noindent\begin{tabular}{m{4cm}<\centering  m{4cm}<\centering}
	\toprule
	\textbf{Specifications} & \textbf{Value} \\ 
	\midrule
	Device & PowerEdge R720xd \\ 
	Vendor & Dell \\
	CPU Cores &	16 \\
	CPU Threads & 32 \\
	RAM	& 16 GB DDR3 \\
	OS & VMware ESXi-7.0U1c \\
	OS on Virtual Machine & Ubuntu20.04 \\
	RDMA NIC & LRES1004PF-2SFP+ \\
	RDMA Support & PVRDMA\\

	\bottomrule
	\end{tabular}
	\end{center}
	\label{tb:implementation}
\end{table}

\textbf{Configuration:} We implement CloudChain with 3948 lines of C/C++ code and conduct the experiments on a RoCEv2-based testbed with two DELL PowerEdge R720xd servers (denoted as server A and B). As shown in TABLE~\ref{tb:implementation}, each server is equipped with two Intel Xeon E5-2650 CPU processors, each having 16 CPU cores and 2.00 GHz frequency.  The DRAM size in each server is 16 GB (4$\times$4 GB) with the type of DDR3, whose speed is 1333 MHz. The L1 cache of each CPU core is 256 KB while the L2 cache is 2 MB. There are 16-cores on one CPU processor to share the same 20 MB L3 cache. The RNIC in our servers is a LRES1004PF-2SFP+,  whose speed is 10000 Mbps. The OS release of all servers is VMware ESXi-7.0U1c. VMware vCenter Server (Vcenter 7.0.1) is installed on each physical server to manipulate multiple virtual machines, with each occupying about 1 GB of RAM and 10 GB of disk space.  The OS release of the virtual machine is Ubuntu 20.04.  In each virtual machine, a virtual Peripheral Component Interconnect express (PCIe) device provides Ethernet interface through the Paravirtual RDMA (PVRDMA) which is a virtual NIC, supporting the standard RDMA API. Except where noted, we measure 200 times for each node to obtain the average of each data point after a warm-up of 30 seconds. Our evaluation quantitatively answers the following questions:
\begin{enumerate}
\item What is the latency of CloudChain for each  basic operation involved in QPs, commit messages, and blocks? (Section~\ref{sec:evaluation:basic})
\item What is the latency of SMCA with variant network size and Byzantine ratio? what are  the throughput and latency of CloudChain when network size and block size are varying? (Section~\ref{sec:evaluation:CloudChain})
\end{enumerate}

\subsection{Performance of Basic Operations}
\label{sec:evaluation:basic}

\textbf{QP (Connection and Disconnection).} We first measure the latency of the connection and disconnection of QPs and report the results in Fig.~\ref{fig:qp}.  There exist 8 and 7 VMs respectively located at server A and B.  Each node needs to create two QPs for the memory regions used in transmitting commit messages and blocks.  For simplicity,  we abbreviate the term \emph{machine} as ``MACH'' in this section, and ``1 MACH'' means the QP is maintained by two nodes in server A while ``2 MACHs'' indicates two nodes are located in distinct servers.  On server A,  it takes about 8.304 ms and 4.786 ms to connect and disconnect a QP for a commit region while about 8.473 ms and 5.034 ms for a block region.  With distinct servers, it takes about 6.022 ms and  5.934 ms to connect and disconnect a QP for a commit region, and about 6.943 ms and 6.075 ms for a block region. Establishing connections for commit and block regions take a different amount of time for both cases of ``1 MACH'' and ``2 MACH'' because each connection involves the partition of the memory spaces located to commit or block, which are different, though connection itself takes the same amount of time for each case.   Nevertheless, disconnections that are free from communications have a nearly identical latency since the time consumed by freeing memory regions is not obviously impacted by the memory size.
Fig.~\ref{fig:qp} also reveals that when nodes are located on a single server, connecting and disconnecting a QP cost a longer time.  This is presumably because when two or more VMs reside on a single server,  they share the computational resources in that machine \cite{vSphere} causing the performance of manipulating QPs slightly reduced.  

\begin{figure}[!htbp]
	\centering
	\includegraphics[width=3.5in]{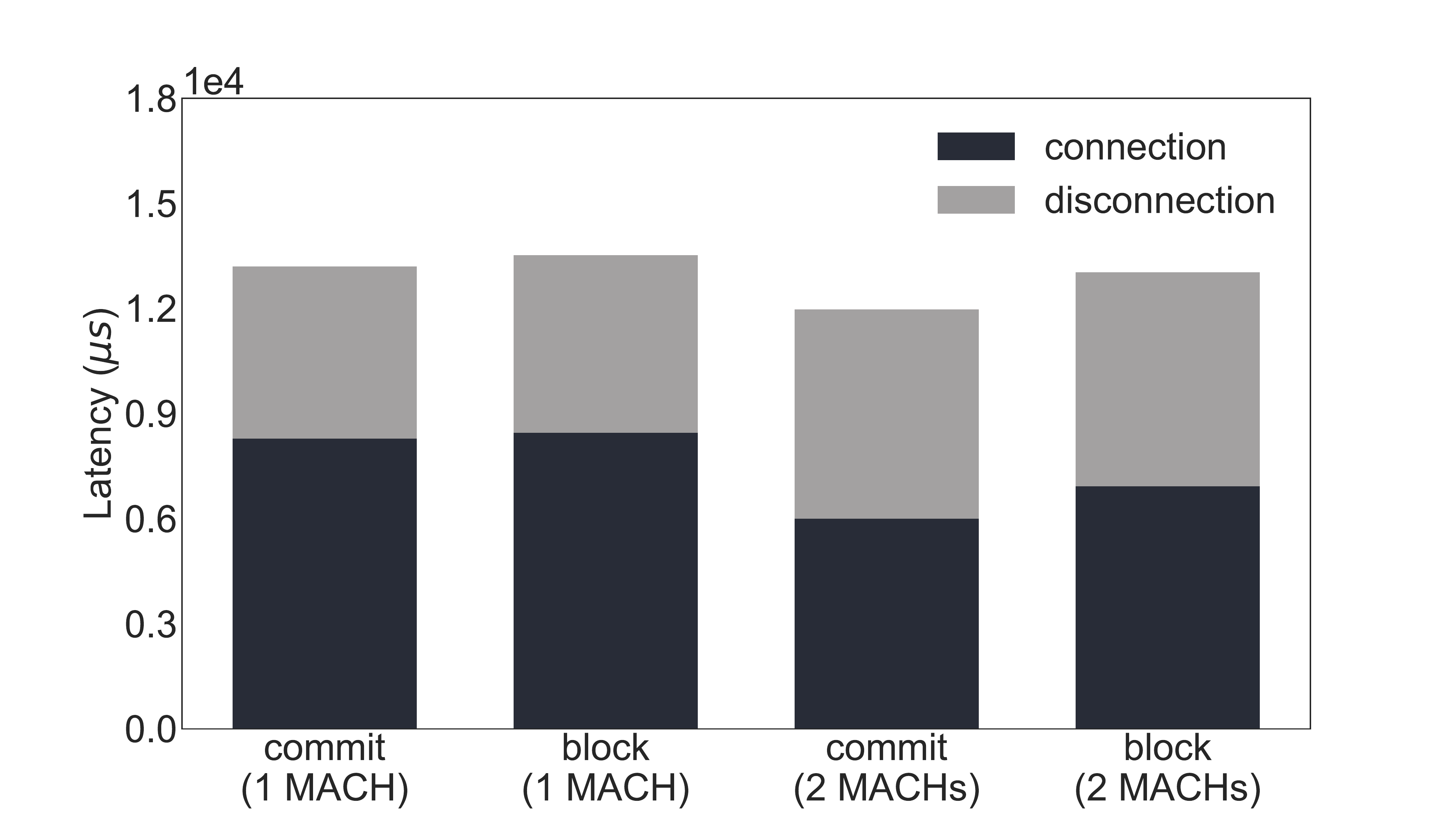}
	\caption{Latencies for QP connection and disconnection}
	\label{fig:qp}
\end{figure}

\begin{figure}[!t]
	\centering
	\subfigure[Send/Receive (1 MACH)]{
		\label{fig:sr}
		\centering
		\includegraphics[width=0.23\textwidth]{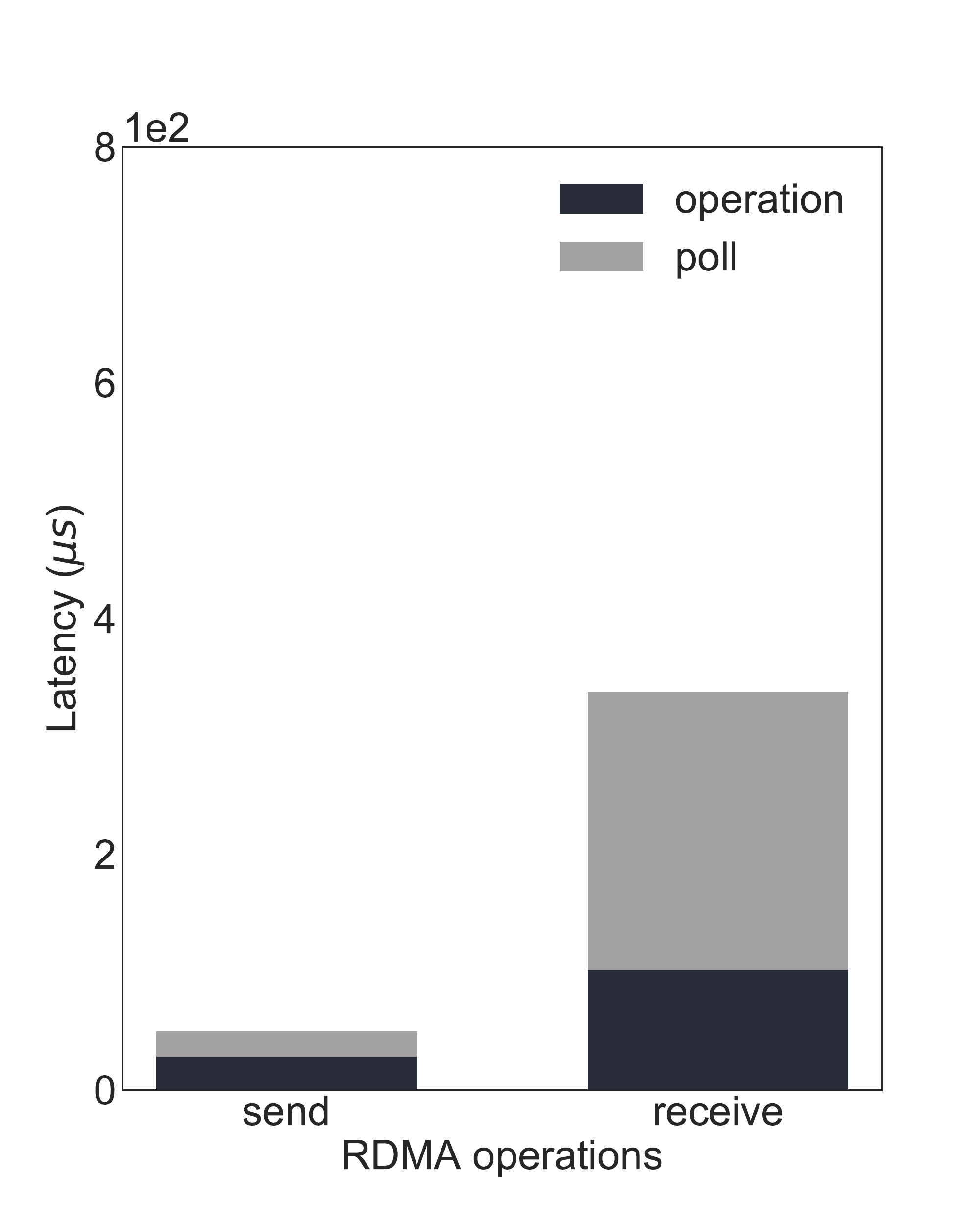}
	}
	\subfigure[Read/Write (1 MACH)]{
		\label{fig:rw}
		\centering
		\includegraphics[width=0.23\textwidth]{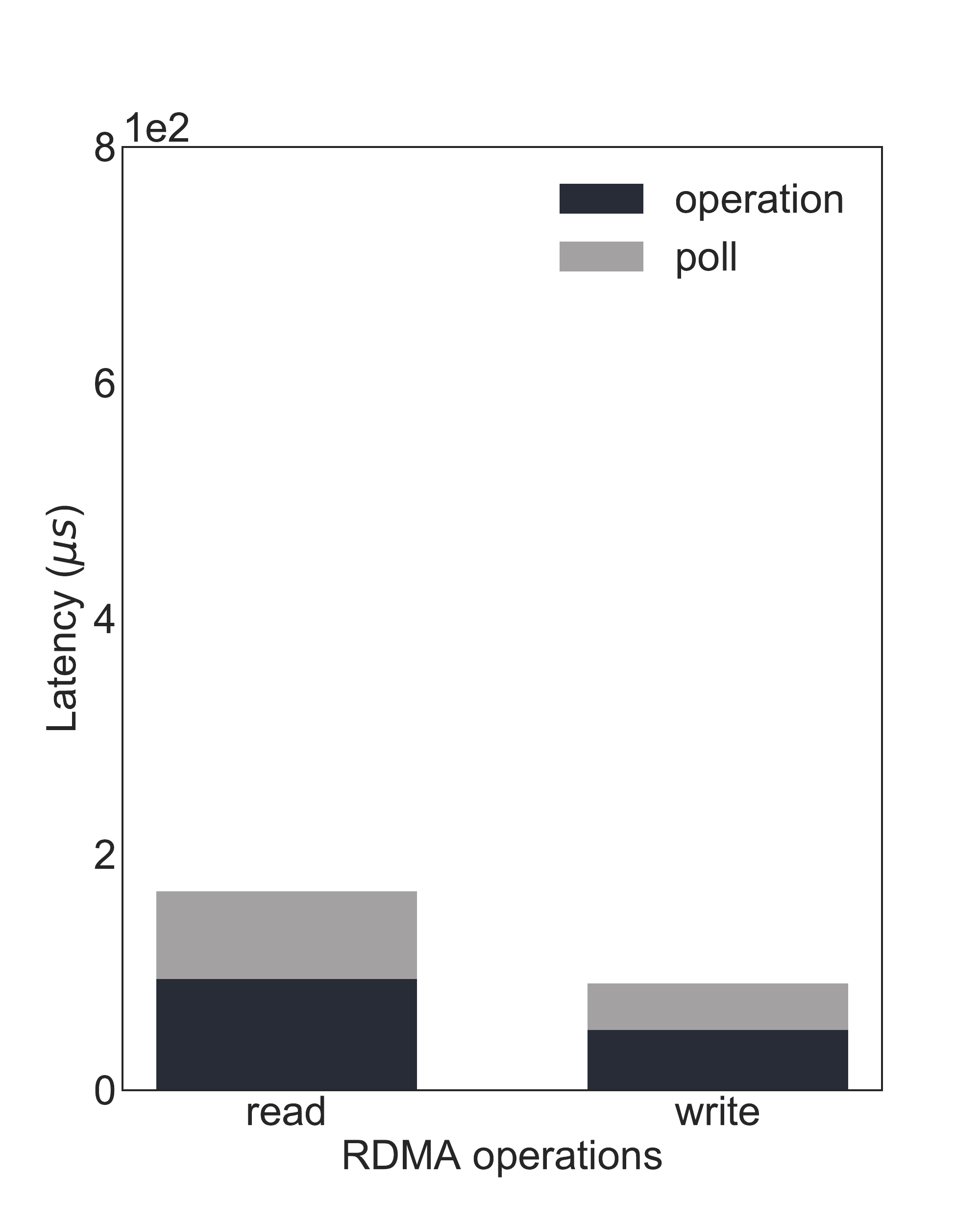}
	}
	\subfigure[Send/Receive (2 MACHs)]{
		\label{fig:sr:dif}
		\centering
		\includegraphics[width=0.23\textwidth]{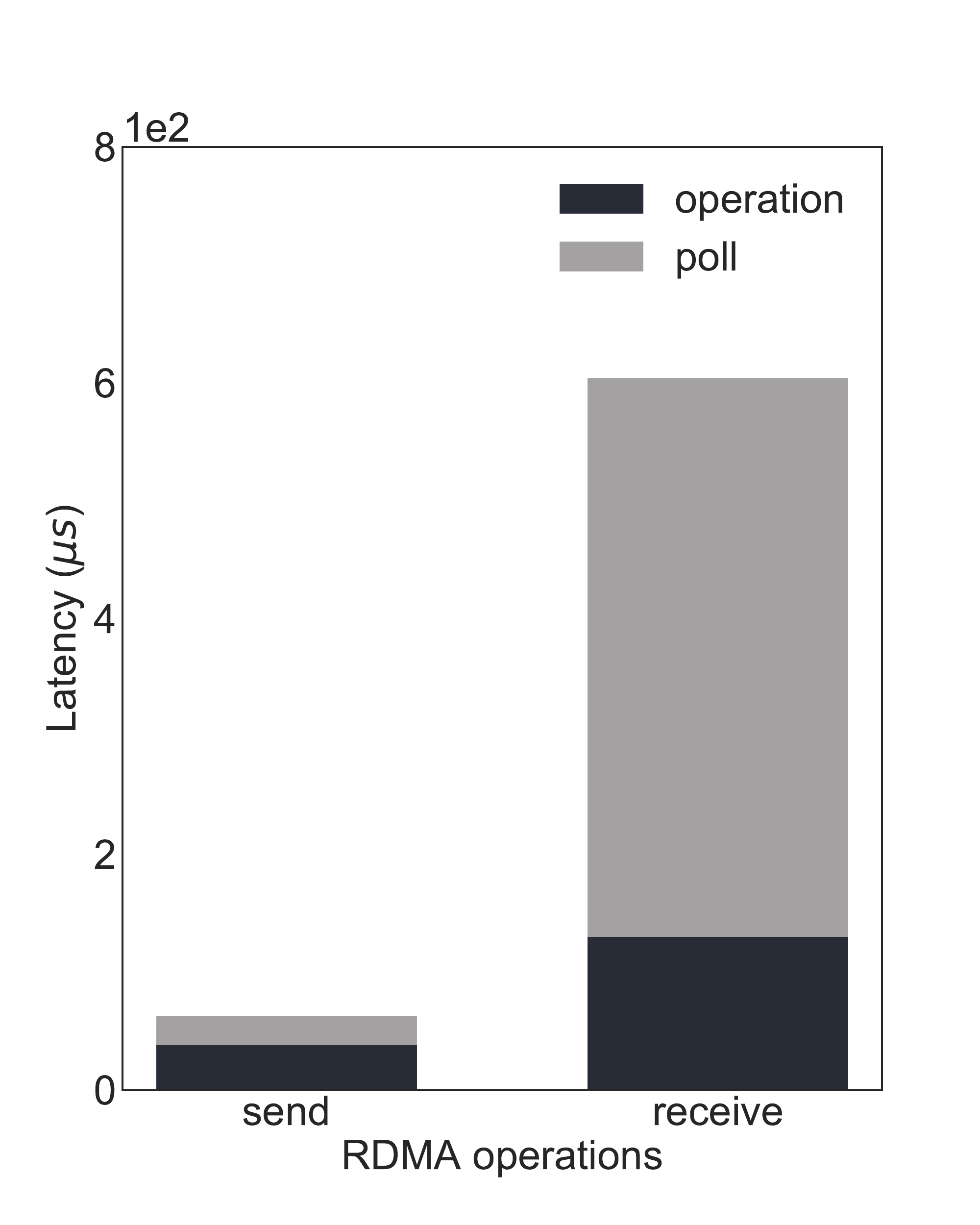}
	}
	\subfigure[Read/Write (2 MACHs)]{
		\label{fig:rw:dif}
		\centering
		\includegraphics[width=0.23\textwidth]{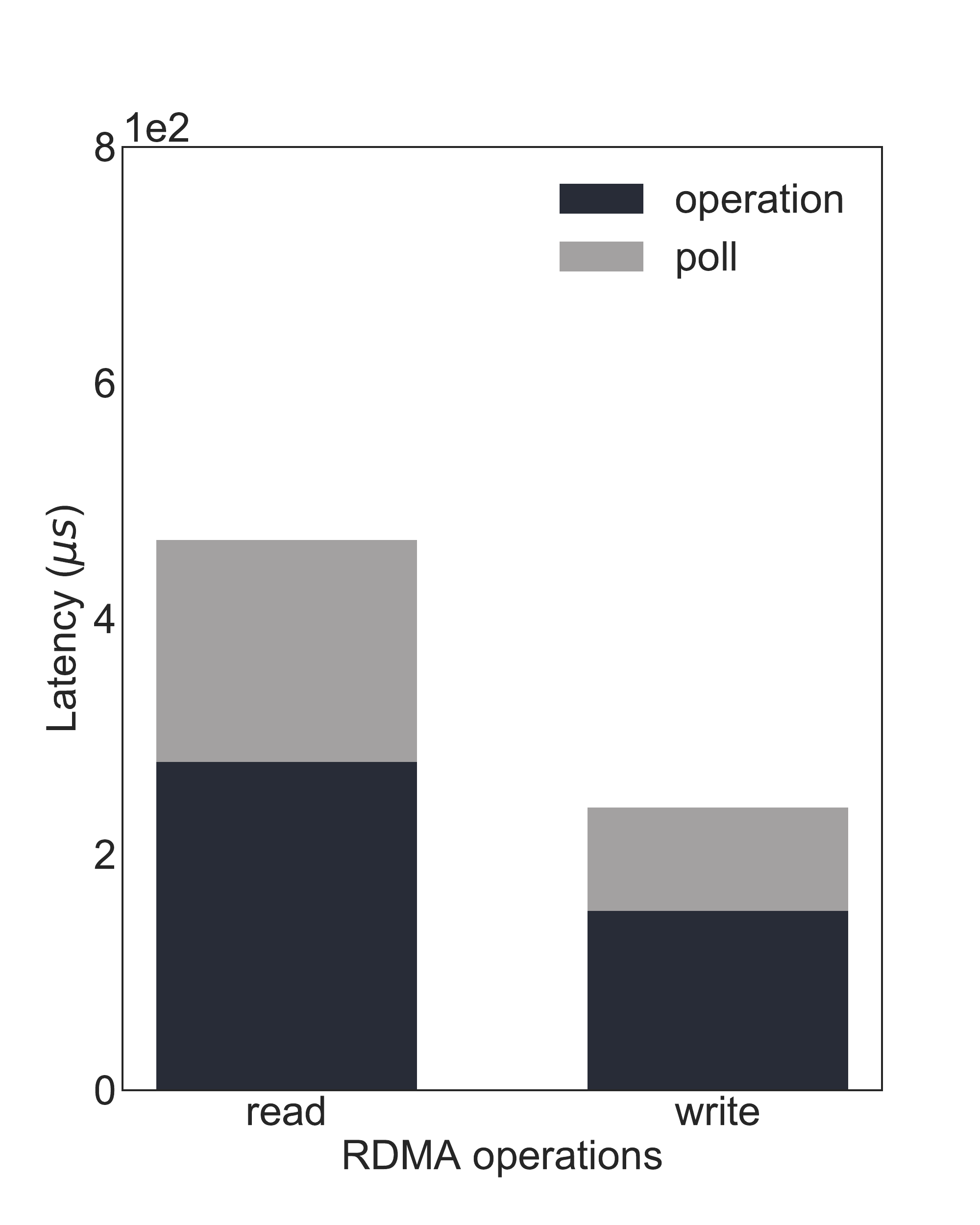}
	}
	\caption{Latency for basic RDMA operations}
	\label{fig:rdma:operation}
\end{figure}

\begin{figure*}[!ht]
	\centering
	\subfigure[Read and write.]{
		\label{fig:read:write:block}
		\centering
		\includegraphics[width=0.45\textwidth]{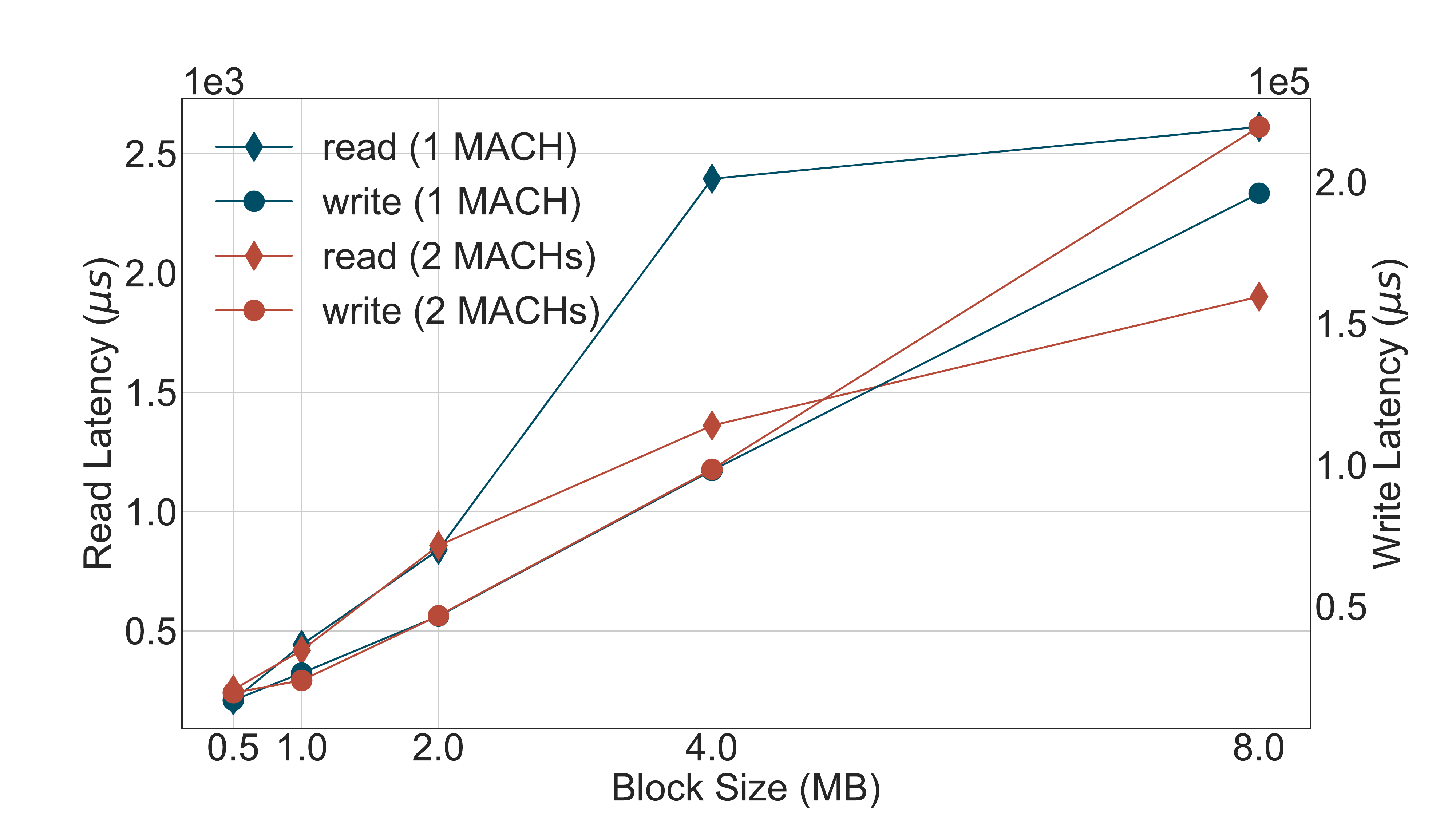}
	}
	\subfigure[Sign and verify.]{
		\label{fig:sign:verify:block}
		\centering
		\includegraphics[width=0.45\textwidth]{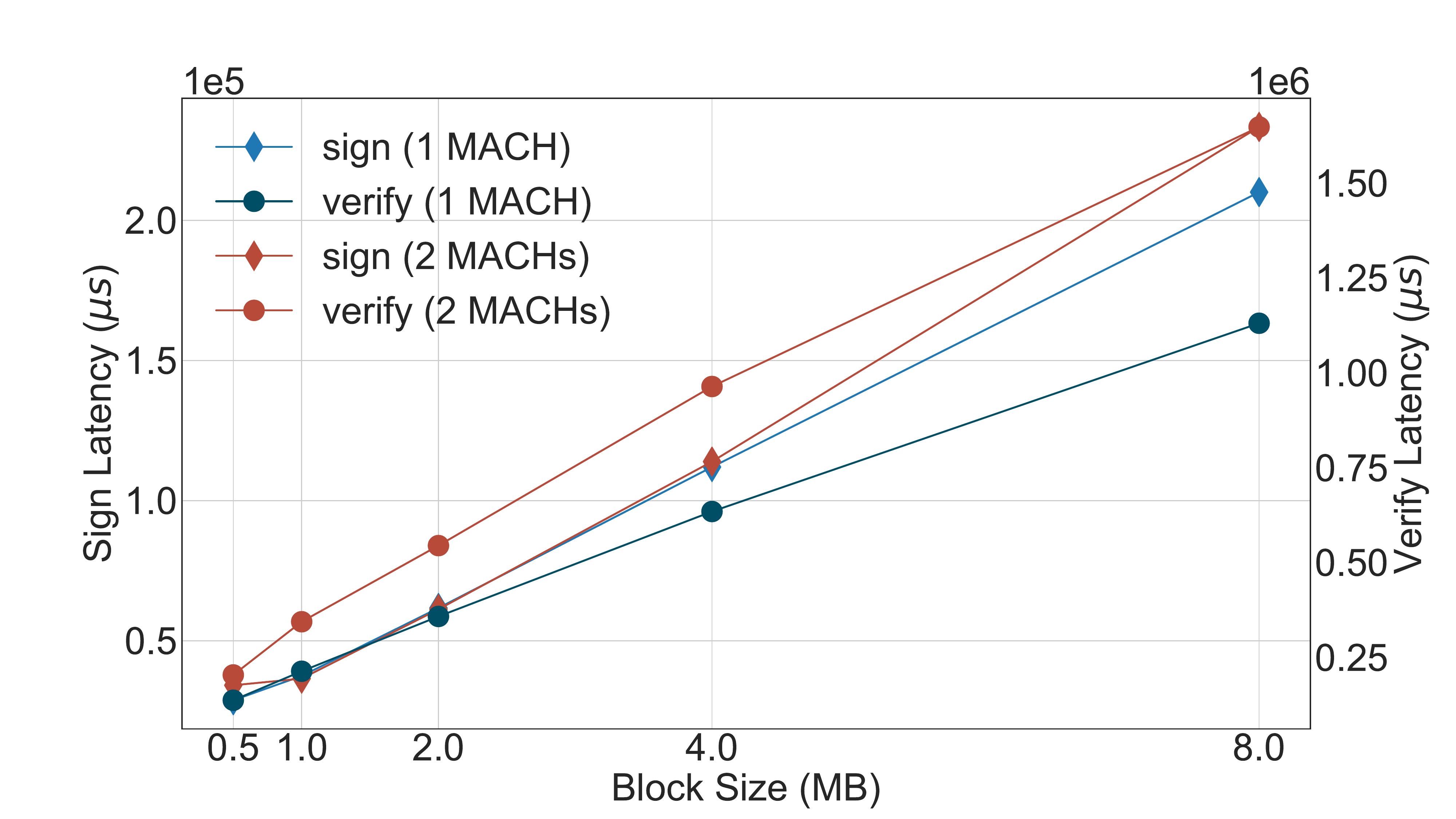}
	}
	\caption{Latency for Read/Write and Sign/Verify operations on a block with a varying block size.}
\end{figure*}

\begin{figure}[!ht]
	\centering
	\subfigure[Latency vs. network size.]{
		\label{fig:consensus:size}
		\centering
		\includegraphics[width=0.23\textwidth]{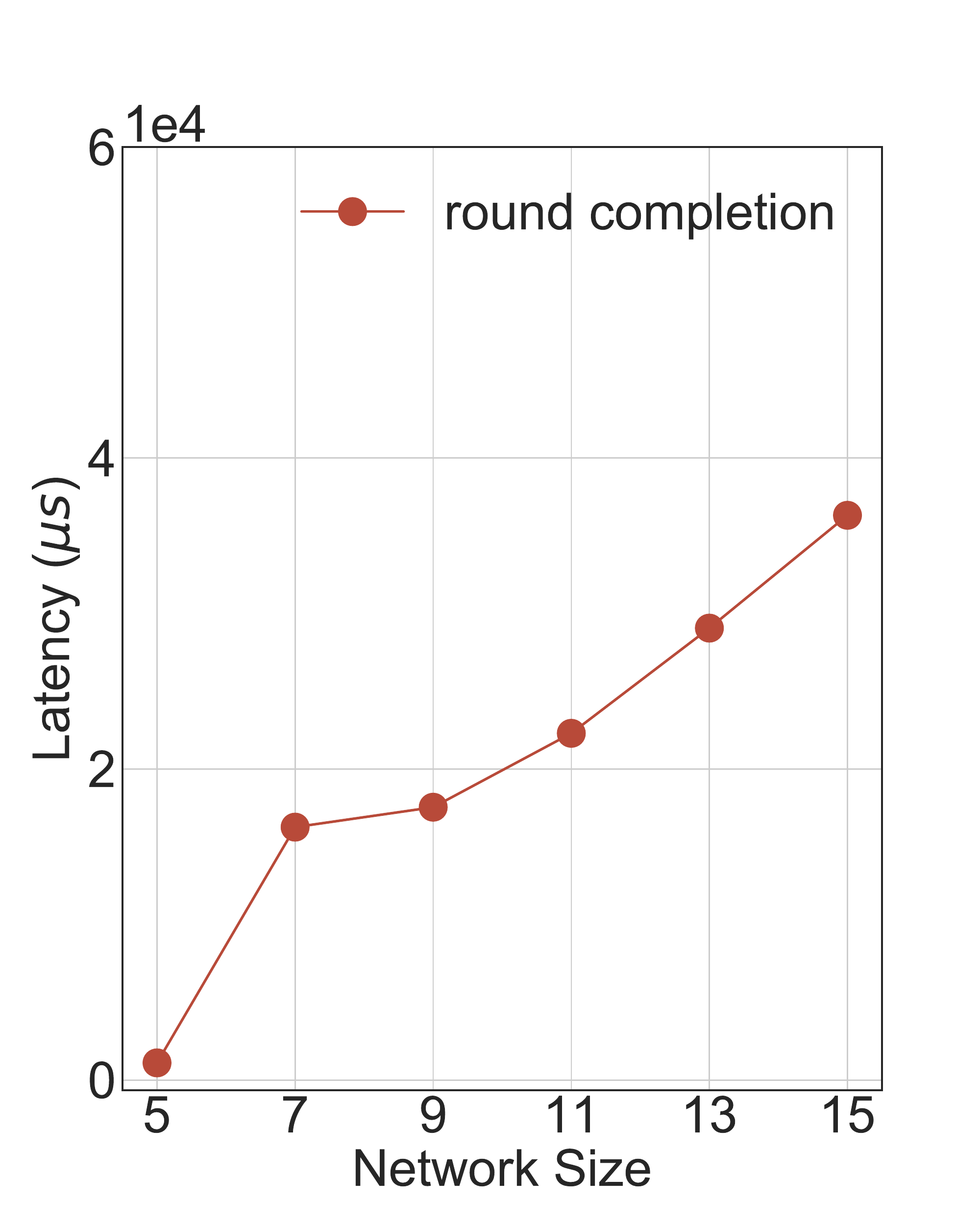}
	}
	\subfigure[Latency vs. Byzantine ratio.]{
		\label{fig:consensus:byzantine}
		\centering
		\includegraphics[width=0.23\textwidth]{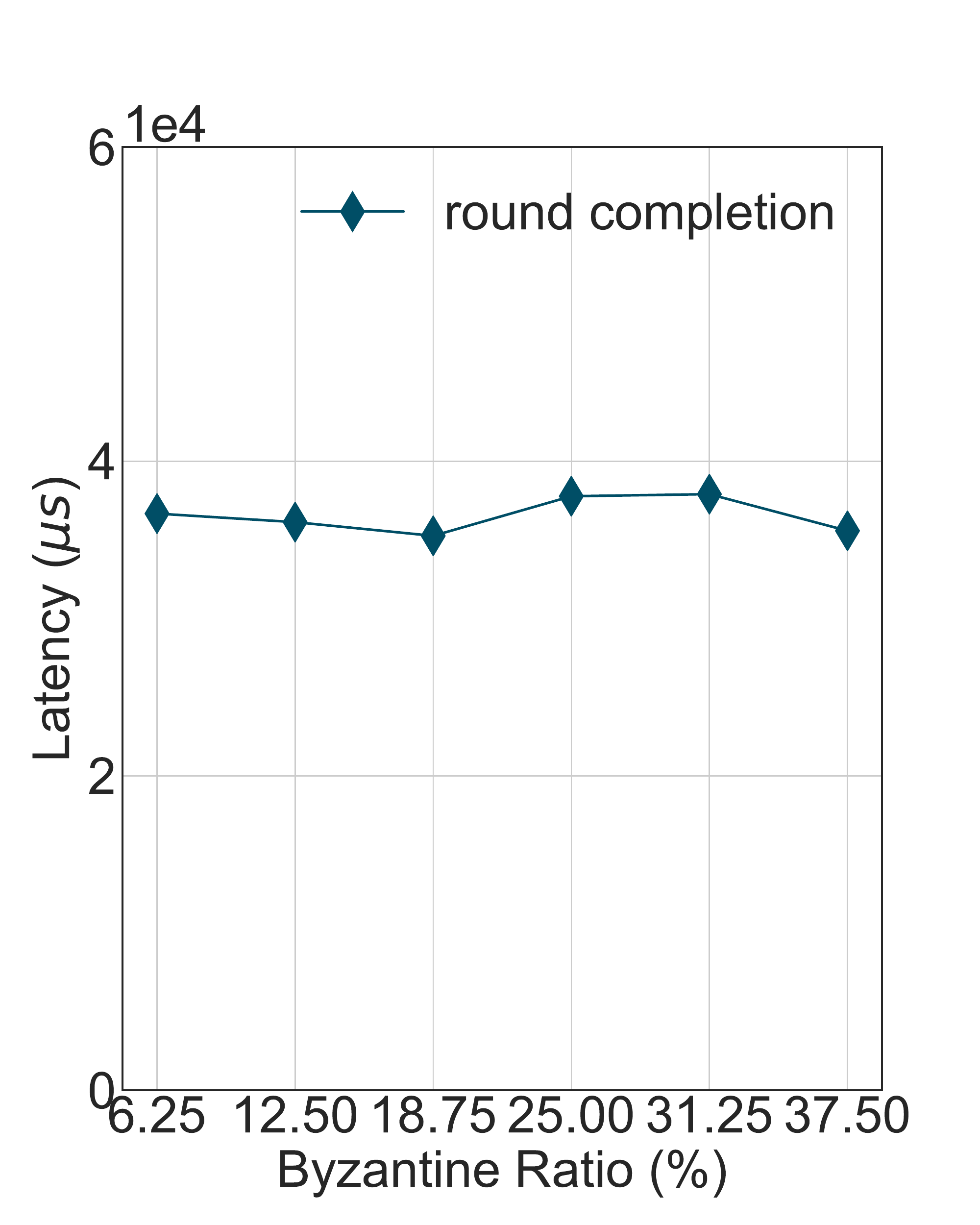}
	}
	\caption{Latency for the SM-based consensus algorithm.}
\end{figure}

\begin{figure*}[!htbp]
	\centering
	\subfigure[Latency and throughput vs. network size.]{
		\label{fig:blockchain:network:size}
		\centering
		\includegraphics[width=0.45\textwidth]{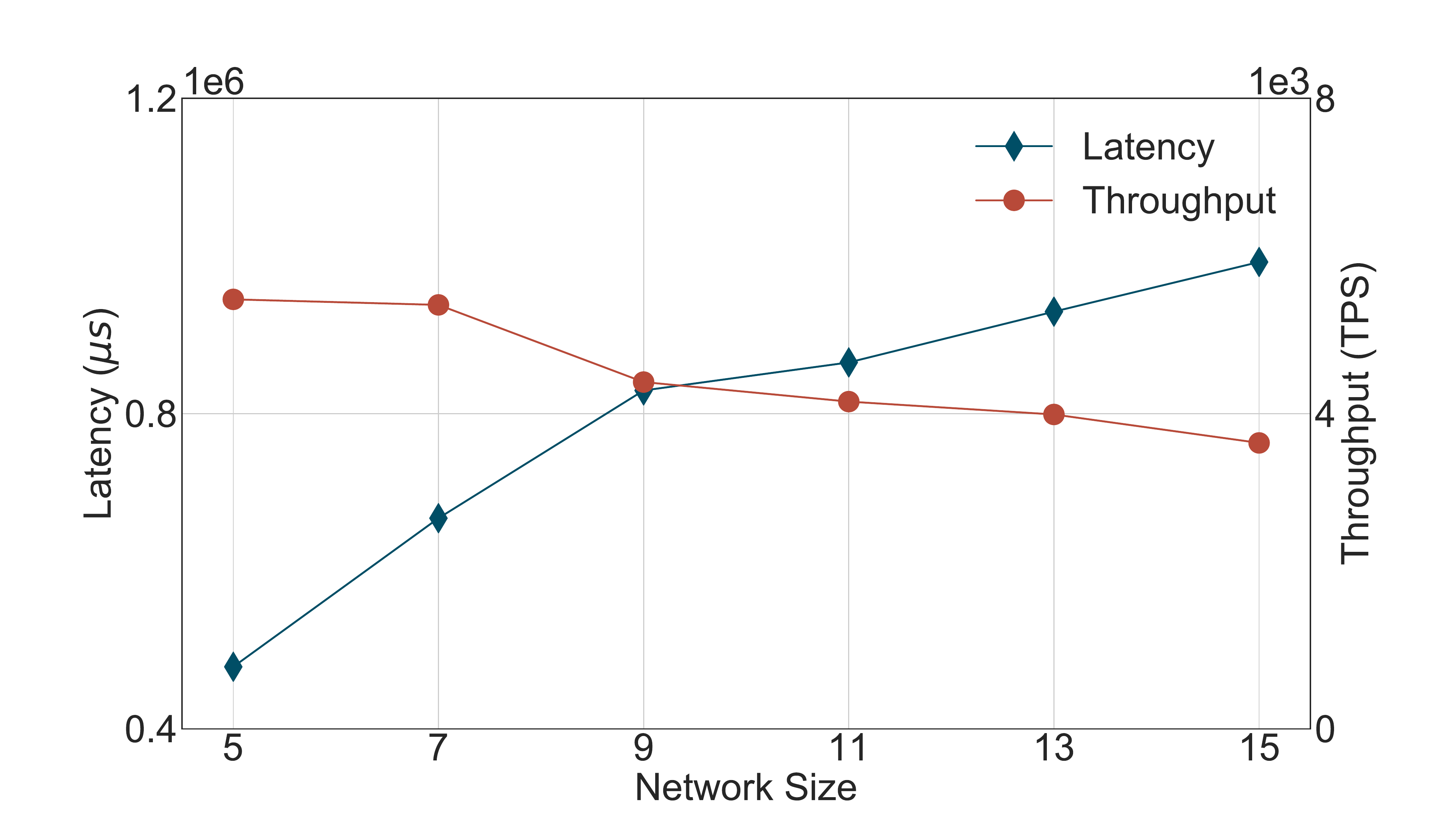}
	}
	\subfigure[Latency and throughput vs. block size.]{
		\label{fig:blockchain:block:size}
		\centering
		\includegraphics[width=0.45\textwidth]{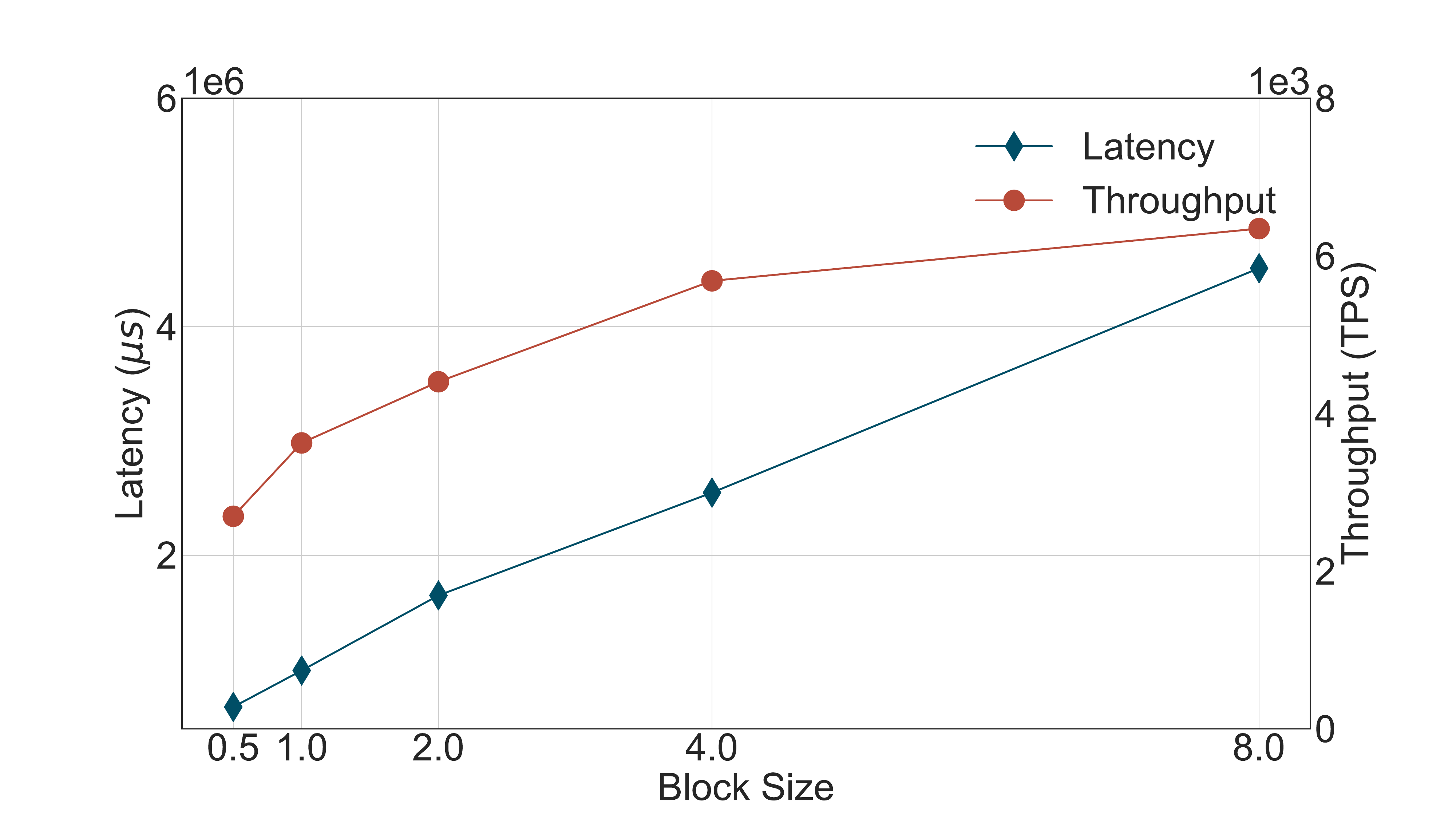}
	}
	\caption{Latency and throughput of CloudChain.}
\end{figure*}

\textbf{Commit Messages (Read/Write and Send/Receive).} We explore the performance of the send/receive and read/write operations on commit messages. Note that we only consider remote read/write and send/write messages as the local ones (within one VM) take negligible time\footnote{Even though CloudChain prohibits any node from writing on its peers' memory regions,  we still show the performance of remote writes for the completeness of our evaluations.}.  As shown in Fig.~\ref{fig:rdma:operation}, the legend ``poll'' is the time consumed by issuing and processing a request while ``operation'' is the pure transmission time (excluding the poll time). 
One can see that ``poll'' takes a quite large amount of time when receiving a message because all received messages are placed in a posted buffer and data is returned to the user though a work completion request.  A read operation takes a longer time than a write, which is anti-intuitive but is consistent with the result shown in \cite{DBLP:conf/hpcc/MacArthurR12}, an interesting paper that reports extensive experimental studies on RDMA performance.  Unfortunately, \cite{DBLP:conf/hpcc/MacArthurR12} does not reveal why  reads outperform writes,  but demonstrates that performance differences can happen when the message size is small and the network is not busy. 
On the other hand, we observe that operations executed by nodes on distinct servers cost more time than on a single physical server due to the time delay of remote memory access. 
We also notice that a pair of read/write operations have lower latency compared to a pair of send/receive operations. This explains why in our SMCA we only use read/write operations. 

\textbf{Block (Read/Write and Sign/Verify).} Then we test the latency of read/write and sign/verify operations on a block with varying size.  We observe that both read and write latency increases linearly with block size as shown in Fig.~\ref{fig:read:write:block}. A block of size 0.5 MB has the read latency of 0.210 ms for 1 MACH and 0.254 ms for 2 MACHs while the write latency is 16.951 ms for 1 MACH and 19.601 ms for 2 MACHs. For a block of size 8 MB, the read latency is 2.612 ms for 1 MACH and 1.901 ms for 2 MACHs while the write latency is 196.018 ms for 1 MACH and 219.450 ms for 2 MACHs. 

Fig.~\ref{fig:sign:verify:block} presents the latency of signing and verifying a block. The process of a sign operation includes signing a block and writing the block into a memory location while a verify operation consists of reading a block from a peer's memory and verifying the signature as well as the block header of that block.  Fig.~\ref{fig:read:write:block} indicates that the sign/verify time overhead increases linearly with block size.  When a block is of normal size (0.5-2 MB), its read and write operations are quite efficient as they can be completed within 0.1 second.   For example. the commonly adopted block size of 2 MB leads to the sign latency of 61.703 ms for 1 MACH and 357.824 ms for 2 MACHs while  verify latency is 61.246 ms  for 1 MACH  and 544.610 ms for 2 MACHs.  
A closer inspection of the figure reveals that nodes on two severs display larger latency compared to those on a single machine. This is because operations on two MACHs cost more time on block transmissions, which is consistent with the results in Fig.~\ref{fig:rdma:operation}.

\subsection{Performance of SMCA and CloudChain}
\label{sec:evaluation:CloudChain}
\textbf{Latency of SMCA.} We first investigate the latency of a one-round execution of SMCA with varying network size and Byzantine ratio.  We test $N=5:2:15$ nodes with $(N+1)/2$ of them located on server A and $(N-1)/2$ on server B.  Fig.~\ref{fig:consensus:size} indicates that the consensus latency increases with the network size, owing to the limited bandwidth.  Fig.~\ref{fig:consensus:byzantine} reports the results for the case when we equally distribute 16 nodes on two servers and set an honest leader on server A while the other 15 nodes can be either Byzantine or honest  according to the Byzantine ratio, which is defined to be the number of Byzantine nodes over the network size.  With the Byzantine ratio varying from 6.25\% to 37.50\%, the latency of consensus is nearly a constant.  This implies that when the leader is not Byzantine, the consensus latency is not obviously impacted by the number of Byzantine nodes.

\textbf{Throughput and Latency of CloudChain.} To determine the performance of CloudChain, we first explore how its latency and throughput are impacted by the network size.  Fig.~\ref{fig:blockchain:network:size} shows our experimental results, scaling the network size from 5 to 15 with $(N+1)/2$ on server A and $(N-1)/2$ on server B.  On one hand, the latency increases with the network size but the growth rate slows down when the number of nodes is above a certain value because the network latency itself is not linearly increased with size. With 15 nodes, the latency of confirming a 2 MB block is about 992.277 ms.  On the other hand, the throughput decreases from 5379 TPS to 3628 TPS, owing to the latency increase. This result can be explained by the fact that a larger network costs more time of scanning the votes.  Another possible reason lies in that we only deploy two servers thus increasing the number of nodes results in heavier contentions on the RDMA-enbaled channel.

Finally, we need to figure out how the performance is impacted by block size. Fig.~\ref{fig:blockchain:block:size} demonstrates that the latency increases with the increasing block size and reaches about 4512.822 ms for an 8 MB block. For a regular block of size 2 MB, the latency is about 1649.237 ms, which implies that CloudChain can commit about 2183 MB transactions per hour\footnote{Bitcoin has a rough throughput of 6 MB per hour.}. The throughput of CloudChain also increases with block size -- a regular block of size 2 MB corresponds to 4404 TPS and an 8 MB block contributes to 6347 TPS. Thus CloudChain can achieve higher throughput by increasing the block size without worrying about vulnerabilities caused by the block transmission delay.

\section{Conclusion and Future Research}
\label{sec:conclusion}
CloudChain is the first system that deeply integrates blockchain with the shared memory and RDMA technologies in cloud to provide high performance while preserving decentralization and protecting against Byzantine adversaries.  
CloudChain takes a modularized design with three layers: network layer, consensus layer, and blockchain layer, which together support three types of blockchain clients to accommodate heterogeneous device capacities.  Besides, CloudChain leverages RDMA to realize a shared-memory communication model, based on which we develop SMCA to achieve an agreement on the ordering of blocks against Byzantine behaviors.  We theoretically prove that CloudChain possesses persistence and liveness, the two most critical security properties of blockchain systems, indicating that CloudChain is securely strong enough to counter a variety of attacks.  Finally, we implement a CloudChain prototype and carry out a thorough experimental study on its performance, where the results validate our design effectiveness. 

In future, we would improve the performance of SMCA by introducing more advanced shared-memory techniques such as mutual exclusion.  Besides, it is promising to design cloud blockchains based on the M\&M model \cite{DBLP:conf/podc/AguileraBCGPT18}, which enjoys the nice features of both shared-memory and message-passing models.

\section*{Acknowledgment}
This study was partially supported by the National Key R\&D
Program of China under grant  2019YFB2102600, the National Natural Science Foundation of China under Grants 61771289 and  61832012, and the Blockchain Core Technology Strategic Research Program of Ministry of Education of China under grant  2020KJ010301.

\bibliographystyle{IEEEtran}
\bibliography{ref}

\end{document}